\newcommand{\In}{\mathrm{In}}
\newcommand{\zer}{\mathrm{zer}}
\newcommand{\txt}{\textstyle}
\newcommand{\T}{^{\mathrm{T}}}
\newcommand{\Greg}{$G^N_{\mathrm{reg}}\,$}
\newcommand\cov{\mathrm{cov}}
\newcommand{\scal}{\ensuremath{\mathcal{S}}}
\newcommand{\pcal}{\ensuremath{\mathcal{P}}}
\newcommand{\hcal}{\ensuremath{\mathcal{H}}}
\newcommand{\kcal}{\ensuremath{\mathcal{K}}}
\newcommand{\qcal}{\ensuremath{\mathcal{Q}}}
\newcommand{\wcal}{\ensuremath{\mathcal{W}}}
\newcommand{\fcal}{\ensuremath{\mathcal{F}}}
\newtheorem{prop}{Proposition}
\newtheorem{coro}{Corollary}
\newtheorem{defn}{Definition}
\newtheorem{lem}{Lemma}
\newcommand{\ci}{\mbox{\protect{ $ \perp \hspace{-2.3ex}
\perp$ }}}
\newcommand{\dep}{\pitchfork}  
\newcommand{\n}[0]{\hspace*{.35em}}
\newcommand{\nn}[0]{\hspace*{.7em}}
\newcommand{\fourl}[0]{\hspace*{1.4em}}
\newcommand{\node}{\mbox {\LARGE
{$\mbox{$\circ$}$}}}
\newcommand{\snode}{\mbox {\large
{$\mbox{$\circ$}$}}}
\newcommand{\margn}{\mbox {\raisebox{-.1 ex}{\margnn}}}
\newcommand{\margnn}{\mbox {\Large
{$\not \: \not $}}$\node $}
\newcommand{\ful}{\mbox{$\, \frac{ \nn \nn \;}{ \nn \nn
}$}}
\newcommand{\fuls}{\scriptsize{\mbox{\raisebox{-.022ex}{$\, \frac{ \nn \n \;}{ \nn \n
}$}}}}
\newcommand{\fla}{\mbox{$\hspace{.05em} \prec
\!\!\!\!\!\frac{\nn \nn}{\nn}$}}
\newcommand{\flas}{\scriptsize{\mbox{$\hspace{.05em} \prec
\!\!\!\!\!\frac{\nn \nn}{\nn}$}}}
\newcommand{\fra}{\mbox{$\hspace{.05em} \frac{\nn
\nn}{\nn
}\!\!\!\!\! \succ \! \hspace{.25ex}$}}
\newcommand{\dal}{\mbox{$  \frac{\n}{\n}
\frac{\; \,}{\;}  \frac{\n}{\n}$}}
\newcommand{\dals}{\scriptsize{\mbox{$  \frac{\n}{\n}
\frac{\; \,}{\;}  \frac{}{}$}}}
\renewcommand\section{\@startsection{section}{1}{\z@}%
{-3.25ex\@plus -1ex \@minus -.2ex}{1.5ex \@plus .2ex}%
{\normalfont\large\bfseries}}
\renewcommand\subsection{\@startsection{subsection}{2}{\z@}%
{-3.25ex\@plus -1ex \@minus -.2ex}%
{1.5ex \@plus .2ex}%
{\normalfont\normalsize\bfseries}}
\renewcommand\subsubsection{\@startsection{subsubsection}{3}{\z@}%
{-3.25ex\@plus -1ex \@minus -.2ex}%
{1.5ex \@plus .2ex}%
{\normalfont\normalsize\bfseries}}
\renewcommand\paragraph{\@startsection{paragraph}{4}{\parindent}%
{3.25ex \@plus1ex \@minus .2ex}%
{-1em}%
{\normalfont\normalsize\bfseries}}
\begin{document}

\noindent{\Large \bf  Sequences of regressions and their dependences\\[6mm]}
{\large  NANNY WERMUTH\\[4mm]}
{\it Department of Mathematics, Chalmers Technical University, University of Gothenburg, Sweden,
and International Agency of Research on Cancer, Lyon, France\\[-2mm]}

{\small 
\em \noindent{\bf ABSTRACT}: In this paper,  we define and study the concept of traceable regressions. These are sequences of regressions in joint or single responses for which a corresponding regression graph captures
not only an independence structure but represents, in addition, conditional dependences that
permit the tracing of pathways of dependence.  We give the properties needed  for transforming 
these graphs and graphical criteria to   decide whether a path in the graph
induces a dependence.  The much stronger constraints on distributions that are faithful 
to a graph are compared to those needed for traceable regressions.   \\[1mm]
\noindent{\it Key words}:  {\rm Chain graphs, Edge-matrix calculus, Faithfulness of graphs, Graphical Markov models, Independence axioms,   Regression graphs,  Traceable Regressions.}\\[-8mm]}

\section{Introduction and motivation}
{\bf  \nn \nn Single and joint response regressions.} Sequences of  regressions are arguably the most important statistical tool  
in observational and  interventional studies for  investigating pathways  of dependences and  hence  development 
over time.  In each regression, one distinguishes  {\bf \em response} variables and {\bf \em regressor} variables; with  responses depending on the regressors.

In  applications,  the substantive context determines which variable pairs  are modeled by a  conditional  independence and which are taken to  be dependent because they are needed in a generating 
process of the joint distribution. Suppose one regressor is a risk factor for a response, then quite different sizes of dependence strength will be relevant if  this response  is the occurrence of a common cold, or the infection with an  HIV virus or an accident in a nuclear plant, since the prevention of these risks  is  judged to be of quite different importance. 

There may be single or joint responses, where only the latter permit to model simultaenously occurring effects of an intervention.  Components of  joint responses may be  discrete or  continuous random variables or be mixed of both types.  Typically,  a subset of variables  is taken as given,
possibly determined by   study  design,  and its components are named {\bf \em context variables} since they describe the context or background or the basic features of  individuals under study.

 The generated joint density factorizes into  an ordered sequence of conditional densities  of  the responses, which we call shortly regressions,   and into  a  joint marginal density  of the context variables.   Under mild conditions, estimation of sequences of regressions can be decomposed into separate tasks  for each response component of the factorization, using   well-developed tools such as linear or logistic regressions or conditional Gaussian regressions, which permit  joint responses to be mixed of discrete and continuous component variables; see Lauritzen and Wermuth (1989), Edwards and Lauritzen (2001).
Tailored to the requirements in many specific situations,  special results are available  to estimate the form and parameters of univariate  and joint conditional distributions. 

However, many consequences of sequences of regressions can already be derived if one does not know  or estimate the  involved parameters  but just  uses an associated graph and   properties of  graph transformations. Relevant, important results concerning independences  in sequences of regressions have been obtained only recently;  see  Sadeghi and Lauritzen (2012) and Wermuth and Sadeghi (2012).  The additional properties needed to draw conclusions about induced dependences are set out in this paper.

Sequences of regressions are an essential part of  longitudinal studies, named also cohort or panel studies
in medical,  economic and social science research. Prominent examples are  the Framingham heart study, the European Community household panel or the Swiss HIV cohort study.
 By using  regression graphs,  it will become possible to simplify analyses and interpretations of  sequences of regressions
 and to   directly  compare dependences arising in different types of sequences of regressions for the same set of variables, or  in sequences of regressions   for subsets of variables  studied for subpopulations. The results in this paper prepare for these possibilities in applications.\\[-4mm]

 {\bf Independences and dependences given by regression graphs.}  Sequences of univariate, that is of  single-response regressions,  have been represented by {\bf \em directed acyclic graphs}.  With regression graphs,  directed acyclic graphs are extended  by including two types of undirected graph, one for joint responses, the other for joint context variables.
  Nodes of the graph 
 represent random variables.  Distinct node pairs are  coupled  by at most one edge so that a regression  graph is one type of what  in graph theory are called {\bf \em simple graphs}.   Each missing edge of a regression graph
 corresponds to a conditional independence where the conditioning set depends on the type and position of  the missing edge, the graph is therefore also one type of {\bf independence graph}.

Properties or axioms  for combining independence statements have been studied  by  Dawid (1979) and Pearl  (1988).  Their
 connections to graphs have been  discussed and modified  in information theory; see Studen\'y (2005) and
 Ln\v{e}ni\v{c}ka and Mat\'u\v{s} (2007).   Different types of extensions have been proposed  in the computer science literature;  see Castillo et al. (1997),  
 Flesch and Lucas (2007). 
 But, for instance, by requiring a property called  strong transitivity, one excludes even  the whole family of regular joint Gaussian distributions.
 By contrast, this family forms  a subclass of what we introduce here as  traceable regressions.

  The  {\bf \em independence structure} of  a  graph is the set  all independence statements implied by the graph.   These are well-studied
for    regression graphs,  but  with
 important results   obtained only recently. For instance,    a proof by Sadeghi and Lauritzen (2012)  implies equivalence of a  {\bf \em pairwise Markov property},
 that is of  the set of independences  attached to the  missing  edges of a given regression graph,   to the  {\bf \em global Markov property}, the criterion known to  give all independence statements implied by the graph.  For two regression graphs with  identical node sets and with the same set of coupled node pairs but with different types of edge,  there is  a   simple graphical criterion to decide  
 whether the two graphs define nevertheless the same independence structure, that is whether they are  {\bf \em Markov equivalent}; see Wermuth and Sadeghi (2012).\\[-3mm]
 
 
 
{\bf  Tracing pathways of dependence.} 
Much less is known about the dependence structures that can be captured by  graphs. Since  graphs do not distinguish between additive and interactive effects
 of  regressor variables on  responses, nor between linear and nonlinear types of dependences, it has
 been argued by Wermuth and Lauritzen (1989) that graphs  may represent {\bf \em research hypotheses about dependent variable  pairs needed   to generate the  joint  distribution}.  For this, each edge present in the graph 
indicates  a conditional dependence, where the conditioning set depends on the type and position of  the edge present,
while  the  form of the dependence is not specified. 

 For tracing pathways of dependences, dependence-inducing sequences of edges  of  different type  are the focus of interest, while independences just  lead to simplified strengthened interpretations of the relevant dependences. 
 In this paper, we set out the properties of traceable regressions and show, in particular that these
 properties impose mild constraints on the types of generated distribution in contrasts with 
 serious constraints  required  in general  for    faithful distributions. This notion was
 introduced   by Spirtes, Scheines and Glymour (1993)  for distributions in which all 
independence statements hold that are implied by a graph and no others.

Tracing pathways of dependence goes back to the geneticist Sewall Wright (1889--1988), who introduced it in 1923 as {\bf \em path analysis}  for
sequences of  univariate linear regressions.  He suggested to judge the goodness-of-fit  of a research hypothesis,  represented by 
a directed  acyclic graph, by comparing  observed correlations   with those that are expected if  the data had been generated over the graph. His  rules for computing   expected marginal correlations, trace all pathways that induce
a dependence by marginalizing.


 The extension of tracing  pathways of dependences, when there is conditioning on variables in addition to marginalizing, became  
feasible  after  a first {\bf \em separation 
criterion} had been formulated by Pearl (1988) and proven  by Geiger, Pearl and Verma (1990)  to give the global Markov property of directed acyclic graphs. When separation fails then there is at least  one path in the directed acyclic graph that may induce a dependence
 by marginalizing over one subset of variables and  conditioning on another set. Here, such a  path is said to be edge-inducing since it leads to a transformed  graph.
 \\[-4mm]

{\bf Structure of the paper.}
In Section 2, we  introduce and discuss dependence base regression graphs and traceable regressions.
Section 3 contains examples of  tracing paths and  of  planning  future follow-up studies  on the same topic so that there are no  paths distorting a generating dependence of interest. 
 Small  Gaussian families of distributions are used to illustrate   independence properties of traceable regressions.  In Section 4, several discrete   families of distributions  are given to  show  how the properties of traceable regressions can be violated. 
 In Section 5,   the known properties  of an edge matrix calculus 
 to transform  graphs are collected first. These are  used  to derive new properties of  transforming regression graphs and  to distinguish traceable regressions from
 distributions that are faithful to regression graphs. 
A short discussion  ends the paper.\\[-9mm]


\section{Definitions and  terminology}

{\bf \nn \nn Some  terminology for graphs.} Most of the following definitions are standard or evocative and   listed for completeness. 
A graph consist of a  {\bf \em node set} ${\bm N}=\{1,\dots d_N\}$ and of  {\bf \em edges} that couple node pairs.
In simple graphs,   edges couple exclusively distinct node pairs by at most one edge so that  the endpoints $i$ and $k$ of an  ${\bm{ik}}$-{\bf \em edge} never coincide.  For an $ik$-arrow,  $i\fla k$,  node $k$ is commonly named the {\bf \em parent} of node 
$ i$. 
 
    For a regression graph, {\bm \Greg},  there is an ordered partitioning  of the node set as $N=(u,v)$ where  $u$ contains the response nodes, each having possibly several parent nodes and $v$ contains context nodes, none of which has a parent node; see for instance Figure 1 below. There are  {\bf \em  three types of edge sets},  $\bm{E_{\flas}}$ for directed dependences of  responses on their regressors, $\bm{E_{\dals}}$  for undirected dependences among components of 
 a  joint response, and $\bm{E_{\fuls}}$ for undirected dependences among context variables.          
 
  An {$\bm{ ik}$}{\bf \em-path} connects the     {\bf \em path endpoints} $i$ and $k$ by a sequence of edges. An $ik$-path can be an edge, otherwise it has  {\bf \em distinct inner nodes} such that  each edge visits  an inner node once.   There is is an {$\bm a${\bf \em -line path}, if all its inner nodes are in subset $a$ of $N$. 
  A path  of arrows is {\em direction-preserving} if all  its arrows  point  in the same direction. 

For $a,b$ arbitrary disjoint  subsets of $N$, one says there is  a {\bf \em  path between} $\bm b$ {\bf \em and} $\bm a$ if   one endpoint   is in $a$ and the other endpoint  is in $b$, while we say  there is  a {\bf \em  path from} $\bm b$ {\bf \em to} $\bm a$ if  the subsets are ordered as $(a,b)$ that is if direction-preserving paths may start in $b$ and point to $a$, but not vice-versa.   In a  direction-preserving $ik$-path, node $k$ is  named an {\bf \em ancestor} of $ i$ and node $i$ a {\bf \em descendent} of $k$.

A {\bf \em subgraph, induced by  a subset} $ \bm a$ of the node set  $N$, consists of  the nodes within $a$ and of the edges present in the graph within $a$.
A special type of induced subgraph,  needed here, consists of three nodes and two edges. It is named a  {\bf{ \sf  V}}-{\bf \em configuration} or just a
{\sf V}. Thus, a three-node path forms a $\sf V$  if its induced subgraph has  two edges.

In    a  {\bf \em complete graph}, every node pair is coupled by an edge. In a {\bf \em connected subgraph},
every node can be reached by a path.  The {\bf \em connected component} of a regression graph are the disjoint  connected graphs that remain when all its arrows are removed. 
 Nodes within a connected component are said to be {\bf \em concurrent nodes}.
 \\[-4mm]

{\bf  Generating sequences of regressions and graphs.} 
We consider   $d_N$  random variables $Y_i$, which may be discrete or continuous or a mixture of both types. For a more formal definition of the measure spaces, asked for by a referee, see for instance Lauritzen  and Wermuth (1989). The variables   have  labels  in  node set $N$  and form  a vector variable, denoted by  $Y_N$.  In the following,  an element $i$ of $N$ is not distinguished from the singleton $\{i\}$ and the union sign for combining subsets 
of $N$ is often omitted.

For  $i,k$ a node pair and $c\subset N\setminus\{i,k\}$, we write  $\bm {i \ci k|c}$ for
  $Y_i, Y_k$  conditionally independent
 given $Y_c$. If such an independence constraint is  satisfied by a density $f_{ikc}$, 
$$  i \ci k|c \iff  (f_{i|kc}=f_{i|c}) \iff f_{ik|c}=( f_{i|c}f_{k|c}). $$

It has become common to say that a joint  family of densities $f_N$
can be {\bf \em generated over a chain graph} if it factorizes according to  a set ordering of the nodes, called a chain, and $f_N$ satisfies all independences
implied by the graph. Different types of chain graph and corresponding models for discrete variables are discussed by Drton (2009).

When independence structures are the focus of interest, one starts traditionally with the graph. Regression graphs  in node set $N$ have  three types of edge sets,
$E_{\flas}$, $E_{\dals}$, and $E_{\fuls}.$  For a regression graph, denoted by \Greg,
 there is  a split of the node set  as $N=(u,v)$,  so that concurrent response nodes  are  in $u$ and  concurrent context nodes in $v$, 
sets of ordered concurrent nodes are denoted by $g_j$ for $j=1, \ldots, J$.  Subgraphs induced by  any $g_j$ are undirected. 
Undirected subgraphs induced by  $g_j$ within $v$ have edges $i\ful k$ and are  commonly called  {\bf  \em concentration graphs}. Undirected 
  subgraphs induced  by $g_j$ within $u$ have  edges $i\dal k$ and are called {\bf \em covariance graphs}.
  
  For $g_j$ in $u$,  nodes in  
$g_{>j}=g_{j+1}\cup g_{j+2}, \ldots, \cup g_J$ are said to be in the {\bf \em past of $\bm{g_j$}}.
  Arrows  may start from any node, except from those in  $g_1$, but never point to a node in $g_{>j}$. 
With $g_{>J}=\emptyset$, 
the  basic factorization of  $f_N$ generated over a regression graph is 
 \begin{equation} f_N= f_{u|v}f_v  \text{ with  }   f_{u|v}=\txt \prod _{j\in u} f_{g_j|g_{>j}}   \text{ and }  f_{v}=\txt \prod _{j \in v} f_{g_j} .
\label{factdens} \end{equation}
Several orderings of $g_j$ may give the same factorization as explained below for Figure 1.

Here,  tracing of pathways is of main interest, hence we start instead with  a {\bf stepwise generating  process of $\bm{f_N}$} for which    $N=(u,v)$ and    {\bf \em one   ordering  of $\bm{g_j} $ is fixed}.  In this process, the density of variables in   $g_{J}$ is generated first, the one of  $g_{J-1}$ given $g_{J}$ next, up to the density of $g_1$ given $g_{>1}$. Then,  variable pairs   needed to generate $f_N$ give the edge set of \Greg with Definition 1 below and   the factorization of equation \eqref{factdens} results.

 For a {\bf \em variable pair
$Y_i, Y_k$ needed in the generating process of $f_N$}, we say it is conditionally dependent  given $Y_c$   for some $c\subset N\setminus\{i,k\}$ and write
 $\bm{i \pitchfork k|c}$ and  a graph is {\bf \em edge-minimal}  for a distribution generated over it, if every missing edge in the graph corresponds to a conditional independence statement and every edge present to a dependence. A family of densities $f_N$  generated over an edge-minimal graph   changes if any  one edge is removed from the graph.  

 \noindent \begin{defn} {\bf Defining pairwise dependences of   \Greg}. An edge-minimal  regression graph specifies  with   $g_1<\dots<g_J$ a generating process for $f_N$, where   the dependences 
 \begin{eqnarray} \label{pairw}
 i \dal k: \n  i\pitchfork k | g_{>j} \nn \nn \nn  \n &\,&  \text{for } i, k \text{ concurrent response nodes in }  g_j \text{ of } u, \nonumber\\
 i \fla k: \n  i \pitchfork k| g_{>j}\setminus \{k\}  &\,&  \text{for  response node } i \text{ in } g_j  \text{ of  }  u  
 \text{ and  node } k \text{  in } g_{>j}, \nn \nn\\
i\ful k:\n     i\pitchfork k| v \setminus\{i,k\}  &\,& \text {for }  i, k \text{ concurrent context nodes in }  g_{j} \text{ of  } v ,\nonumber
 \end{eqnarray}
define the   {\bf edges present} in \Greg.  The meaning of each {\bf edge missing}  in \Greg results with the dependence sign  $\pitchfork$  replaced by  the independence sign $\ci$.
\end{defn}

Thus,   for the given  order of the components $g_j$,  the graph implies for  each variable pair $i,k$  either  conditional dependence or independence given the same  type of conditioning set, with   $i \dal k$ for  two response
nodes, with $i\fla k$ for  $i$ a response node in $g_j$ and $k$ a node in the  past of $g_j$, with $i\ful k$ for  two context nodes.   Notice that except  for context nodes,   each pair of  variables  is  exclusively conditioned on variables that are in the past of  the $g_j$ that contains node $i$. This permits to model simulateanously occurring  effects of an intervention while   this is not possible if the graph is  directed acyclic or if it is another type of  chain graph.



   Different generating processes may lead to the same regression graph and hence to the same  implied independence structure. Then,
   some  components,  $g_j, g_{j+1}, \ldots, g_t$, say of \Greg,  have an   interchangeable labeling because  they  induce  disconnected undirected subgraphs.   Such  components are displayed in Figure \ref{hypfig} within  stacked boxes.  
   In a connected \Greg, connected  stacked  components  $g_j, \ldots, g_t$  have the nodes in $g_{>t}$ as  their {\bf \em common past}  and   nodes in    $g_{<j}=g_{1}\cup g_{2}, \dots , \cup  g_{j-1}$ as their {\bf \em common future}.  
    For a generating process,   one of the possibly many  {\bf \em compatible orderings} is fixed. In each,  arrows  point to response nodes in the  common future but never  to a node in the common past.

    In Figure \ref{hypfig} below, $g_6$ and $g_7$ are in $v$, all other connected components  are in $u$.
    The order implied by the arrows in  $ E_{\flas}$ of \Greg 
 remains  unchanged if, for instance,  the two disconnected subgraphs induced by $g_3$ and $g_4$ are  interchanged  or if  they are replaced by  a  single dashed-line complete graph in nodes
$ \{4,5,6,7\}$.

  Recall that connected components of \Greg  are uniquely  obtained as the connected subgraphs that remain after  deleting all arrows from the regression graph and  keeping the undirected  edges and all nodes. Thus, for any given graph, it is not necessary to show stacked boxes, but they  are sometimes  included to reflect the first ordering, the  prior knowledge about possibly joint responses
 and joint context variables. 
   By convention, we number nodes  and  components $g_j$ of \Greg    first from top to bottom,  then from left to right.   
  In Figure \ref{hypfig}, $g_3=\{4,5,6\} $ and $g_8=\{12,13,14\}$ contain three nodes, each of $g_2$ and $g_6$ contain 
   two nodes, all others contain a single node.

  \begin{figure} [H]
\centering
 \includegraphics[scale=.46]{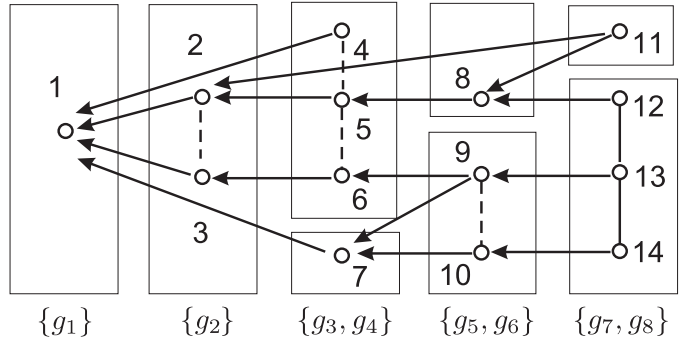}
 \caption[]{\small{A  regression graph in   14 nodes and  node set partitioned into 8 connected components; single responses in $g_1,  g_4, g_5$ and joint responses in $g_2, g_3, g_6$; context variables  in $g_7, g_8$.}}
  \label{hypfig} \vspace{2mm}
 \end{figure}

   Single responses correspond in the statistical model to univariate regressions, joint responses to multivariate regressions, including the seemingly unrelated  regressions of Zellner (1962).  In Figure \ref{hypfig},  seemingly unrelated regressions belong to the subgraphs induced by each of the three node sets $\{2,3,5,6\},  \{5,6,8,9\},
   \{9,10,13,14\}.$
      \\[-4mm]

{\bf General and special properties of probability distributions.}  For $i,h,k$ single, distinct  indices and $a,b,c,d$ disjoint subsets of  index set $N$, where only $d$ may be empty, there are the common independence properties $(i)$ to $(iv)$ which are satisfied by  all probability distributions.  The discussed  properties  $(v)$ to $(viii)$ constrain distributions, but they  permit the use of just the graph
to derive different types of consequences for families of distributions $f_N$ generated over \Greg.
 \begin{eqnarray*}
 (i)&\!  &\text{\bf symmetry: }  a\ci b|c \iff  b\ci a|c, \\
(ii) &\, &\text{\bf  contraction: } (a\ci b|cd   \text{ and } b\ci c|d)  \iff ac\ci b|d, \\
(iii) &\,&  \text{\bf  decomposition: } a\ci bc|d     \implies (a\ci b|d \text{ and } a\ci c|d), \\
(iv) &\, & \text{\bf weak union: }  a\ci bc|d     \implies (a\ci b|cd \text{ and } a\ci c|bd).
\end{eqnarray*}
Joint distributions, for which the reverse implications of $(iii)$ and of $(iv)$ hold, have as additional properties,
respectively, 
 \begin{eqnarray*} 
 (v) &\, & \text{\bf composition: }  (a\ci b|d \text{ and } a\ci c|d) \implies  a\ci bc|d,\\
(vi) &\, &  \text{\bf   intersection: }   (a\ci b|cd \text{ and } a\ci c|bd)  \implies a\ci bc|d.
\end{eqnarray*}
Properties $(v)$ and $(vi)$ are needed to derive the independence structure implied by \Greg.
Two further types of properties are to be considered for tracing pathways of dependence,  
\begin{eqnarray*} 
 (vii) &\, & \text{\bf set transitivity: }  (a\ci b|d \text { and }  a \ci b|cd) \implies  (a\ci c|d \text{ or }  b\ci  c|d) \, , or  \\[-13mm]
\end{eqnarray*}
\begin{eqnarray*} 
 (viii) &\, \!\!& \text{\bf singleton transitivity: } 
 (i\ci k|d   \text{ and } i\ci k| hd) \implies (i \ci h| d \text{ or } k\ci h | d).
 \end{eqnarray*}
Thus, distributions that satisfy set transitivity are also singleton-transitive, since  $c$ may contain only one element.
Singleton transitivity requires for a conditional independence of $Y_i, Y_k$ given $ Y_d$ and given $Y_h, Y_d$ to hold both, there has
to be at least one additional independence given $ Y_c$ involving  $Y_h$,  the additional variable  in the conditioning set.  
 It is 
unfortunate that, in the literature, the  term weak transitivity has sometimes been used for property  $(vii)$ and sometimes for $(viii)$.

We shall show that 
set transitivity, $(vii)$,  is used  in addition to $(i)$ to $(vi)$ in  transformations of \Greg by which no edge of the starting graph
 gets removed and by which an edge criterion for the global Markov property is obtained, while only singleton transitivity, $(viii)$,  is needed  in addition to $(i)$ to $(vi)$ to  decide with a given edge-minimal \Greg, whether a path is  inducing a dependence for its path endpoints or not.

Singleton transitivity is  a feature of what we define  below as traceable regressions. So far,  it had been  known to be common to all positive  binary distributions where,  for instance, for  ($1\pitchfork 2$ and $1\pitchfork 3$) either $2\ci 3$ can hold or $2\ci 3|1$ but  not both; see  Simpson (1951). It also holds in all regular Gaussian distributions; 
see for instance Studen\'y (2005),  Corollary 2.5 in Section 2.3.6. 

On the other hand, set transitivity imposes stronger constraints on any specific  distribution; see  for instance the discussion of Figure 1 for trivariate binary  distributions in Wermuth, Marchetti and Cox (2009)
 It also 
excludes some  regular Gaussian families of distribution such as the following.\\[-4mm]

{\bf A regular Gaussian family violating set transitivity.}  For $N=(u,v)$, let $Y_u$ and $Y_v$ be mean-centered vector variables  with  a joint Gaussian distribution. Let them  have equal dimension, $dim$, the components of
$Y_u$  and of $Y_u$  be mutually independent and   all elements in the covariance matrix $\cov(Y_u, Y_v)=\Sigma_{uv}$ be nonzero,  then every component of $Y_u$ is  dependent on  every component in $Y_v$  and  
$$    \cov(Y_u)=\Sigma_{uu} \text{ diagonal},  \nn \nn   \cov(Y_v)=\Sigma_{vv}\text{ diagonal}.$$ 
Let further  the components of $Y_v$ have   equal variances $\omega>1$ and the  equal variances of the components $Y_u$ be $\kappa>\omega+1$.
Whenever in the described situation $\Sigma_{uv}$ is orthogonal,  the joint covariance matrix is invertible so that the joint distribution is regular and the  marginal independences carry over to conditional independences  so that also 
$$    \cov(Y_u|Y_v)=\Sigma_{uu|v} \text{ diagonal},  \nn   \cov(Y_v|Y_u)=\Sigma_{vv|u} \text{ diagonal}.$$
Set transitivity is always violated, for a split $v=(a,b)$, $c=\{1, \ldots, dim\}$ and $d=\emptyset$. 
 This family extends the example  in equation (8) of Cox and Wermuth (1993).\\[-4mm]

 {\bf  Some important properties of {\bm \Greg} and $\bm{f_N}$.}   Two basic types of {\sf V}s in \Greg need to be distinguished.
There are  {\bf \em collision {\sf V}s}:  
$$ i \dal \snode \fla k,  \nn \nn  i  \fra \snode \fla k, \nn \nn i\dal \snode \dal k,$$
and {\bf \em transmitting {\sf V}s}: 
$$ i \fla \snode \fla k, \nn \nn  i \fla \snode \ful k, \nn \nn  i \ful \snode \ful k, \nn \nn i \fla \snode \fra  k, \nn \nn 
  i \fla \snode \dal k\,.$$
  
 Recall that two different graphs in the same node set are Markov equivalent if they define nevertheless the same independence structure, that is the  set of all independences implied by the graph. The  skeleton of a graph consists of its nodes and its set of edges, irrespective of the type of edge. It  results by replacing each edge present by a full line.
\begin{lem} {\bf Markov equivalence}. {\rm (Wermuth and Sadeghi, 2012).}
Two  regression   graphs with the same skeleton are Markov equivalent if and only if their  sets of  collision {\sf V}s
are identical.
\end{lem}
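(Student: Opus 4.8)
The plan is to reduce Markov equivalence to equality of the separation relation that the global Markov property attaches to each graph, and then to treat the two implications separately. For regression graphs the global Markov property is equivalent to the pairwise one (Sadeghi and Lauritzen, 2012), so the independence structure of a \Greg is exactly the set of statements $a\ci b\mid c$ for which every path between $a$ and $b$ is blocked given $c$: a path is blocked unless each of its inner nodes is either a transmitting {\sf V} whose middle node lies outside $c$, or a collision {\sf V} whose middle node lies in $c$ or has a descendent in $c$. The lemma then amounts to showing that this connecting-path relation depends on \Greg only through its skeleton and its set of collision {\sf V}s.

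For sufficiency, suppose the two graphs share both skeleton and collision {\sf V}s. The skeleton fixes, for every pair of node sets, the same family of paths regarded as node sequences, and at each inner node of such a path the induced two-edge subgraph is a {\sf V}; by hypothesis this {\sf V} is of collision type in one graph exactly when it is in the other. Hence, for a fixed $c$, the transmitting inner nodes transmit or block identically in the two graphs, and it remains only to control the collision nodes, which are opened precisely when the middle node lies in $c$ or has a descendent in $c$. I would show that the part of this descendent relation needed to open a collider is itself preserved: arrow directions can differ only across shielded triples, since an unshielded triple is a {\sf V} and is pinned down by hypothesis, and I would argue, using the ordering of the components $g_j$ together with properties $(v)$, $(vi)$ and $(vii)$, that any collider opened by a descendent chain in one graph is opened by a possibly rerouted connecting path in the other. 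This is the exact analogue of the difficulty caused by covered-edge reversals in the directed acyclic case, and its resolution is what forces set transitivity, $(vii)$, into the argument.

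For necessity I would argue by contraposition. Assume equal skeletons but differing collision {\sf V}s, so that some unshielded triple with endpoints $i,k$ and middle node $m$ is a collision {\sf V} in one graph, say $G_1$, and a transmitting {\sf V} in the other, $G_2$. Since $i$ and $k$ are not adjacent, I would choose a conditioning set $c$ containing $m$ that blocks the remaining $i$--$k$ paths (this is the routine part, obtained by conditioning on a suitable set of ancestors). Then in $G_2$ the node $m$ blocks the triple and yields $i\ci k\mid c$, whereas in $G_1$ the same $m$ opens the collider, the triple stays connecting, and $i\ci k\mid c$ fails. A single statement thus distinguishes the two independence structures, so the graphs are not Markov equivalent.

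The step I expect to be hardest is sufficiency, specifically the claim that matching skeleton and collision {\sf V}s already forces the relation ``lies in $c$ or has a descendent in $c$'' to act identically on collider nodes despite possibly different arrow directions on shielded triples; establishing the stability of connecting paths under such re-orientations, and pinning down exactly where set transitivity $(vii)$ rather than mere singleton transitivity $(viii)$ is required, is where the real work lies. The necessity direction, by contrast, reduces once the separation criterion is in place to a careful choice of $c$ and should be essentially routine.
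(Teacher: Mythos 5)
You should note first that the paper itself contains no proof of this lemma; it is quoted from Wermuth and Sadeghi (2012), so your attempt can only be judged on its own merits. Your necessity direction is essentially sound and matches the standard argument: if the triple $(i,{\rm o},k)$ is transmitting in $G_2$, then in every one of the five transmitting forms the inner node lies in ${\rm ant}_{ik}$ of $G_2$, so taking $c={\rm ant}_{ik}$ gives $i\ci k\,|\,c$ in $G_2$ by the global Markov property (Lemma 2), while in $G_1$ the same triple is a collision {\sf V} with inner node in $c$, hence an active path, so $G_1$ does not imply $i\ci k\,|\,c$. Your phrase ``a suitable set of ancestors'' is doing real work there --- anterior sets, not ancestor sets, are what is needed for regression graphs --- but the repair is routine.

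The sufficiency direction, however, has a genuine gap, and of two kinds. First, a category error: Markov equivalence is a purely graph-theoretic assertion that two separation relations coincide, so the distributional axioms $(v)$, $(vi)$ and $(vii)$ --- composition, intersection and set transitivity, which are properties of densities $f_N$, not of graphs --- cannot legitimately enter the proof; where you write that the rerouting step ``forces set transitivity into the argument,'' you are reaching for a tool of the wrong type, and no combinatorial substitute is supplied. Second, your structural claim that ``arrow directions can differ only across shielded triples'' is false for regression graphs: because there are three edge types, any single edge may change among $\fla$, $\dal$ and $\ful$ without creating or destroying a collision {\sf V} (a one-edge graph is Markov equivalent in all three forms), and at unshielded transmitting triples the edge types may vary among all five transmitting forms. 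Consequently the anterior/descendant relation used to open colliders can change far more radically than in the covered-edge analysis for directed acyclic graphs, and the crux of the lemma --- that a collider opened via ${\rm ant}_c$ in one graph is always compensated by a rerouted active path in the other --- is precisely what you assert rather than prove. As it stands, you have established necessity and correctly located, but not closed, the hard half.
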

  
A more compact  characterization of the pairwise independences  in Definition 1 is based on the  notion of
anterior paths. Recall first that with $N=(u,v)$, there are only undirected full-line paths within $v$ and  there are  only arrows pointing from  $v$   to $u$. 
An {\bf \em anterior $\bm{ik}$-path} is either a descendant-ancestor $ik$-path, or  a context nodes $ik$-path, or a    descendant-ancestor $iq$-path with a  context-nodes  $qk$-path  attached to it, 
 \begin{displaymath}
i \
\fla \underbrace{\overbrace{ \snode \fla \snode,  \ldots, \snode \fla  q} ^{\text {\normalsize{ancestors of  \textit{i}}} }\ful \snode, \ldots , \snode \ful k}_{\text{\normalsize{anteriors of \textit{i}}}}.\\
 \end{displaymath}
We denote  the joint set of anteriors of  nodes  $i,k$ by ${\rm ant}_{ik}=\{{\rm ant}_{i}\cup {\rm ant}_{k}\}\setminus \{i,k\}.$
 Similarly, for any subset $c$ of $N$, the anterior set  of nodes within $c$ is denoted by  ${\rm ant}_{c}$. 
  
 The intersection $(vi)$ and the composition property  $(v)$ are needed for Lemmas 2 and 3.  By using them,   the independences  attached to the missing edges  of \Greg in Definition 1 reduce to the more compact  statements $i\ci k| {\rm ant}_{ik}$   and this leads to the  definition of an active path in \Greg due to Sadeghi (2009) for a more general class of graphs.
 
Let   $\{a,b,c,m\}$ partition  $N$, where $c$ denotes a conditioning set of interest for $a,b$ and $m$ the set of nodes to be ignored that is to be marginalized over. Only $c$, $m$ or both may be empty sets. 
 A {\bf \em path  in {\bm \Greg} is active given $\bm c$} if of its inner nodes, every  collision node is in $c\cup {\rm ant}_{c}$ and every transmitting node is in $m$. For graph transformation, such  a path is also said to be {\bf \em edge-inducing}.

\begin{lem} {\bf Global Markov property of {\bm \Greg}.} {\rm (Sadeghi, 2009).}  The regression graph \Greg implies $a\ci b|c$ if and only if there is no active path  in \Greg between $a$ and $b$ given $c$. \end{lem}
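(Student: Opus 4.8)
The plan is to reduce the active-path criterion to ordinary separation in an undirected graph, so that Lemma~2 follows from the Markov theory of concentration graphs together with properties $(i)$--$(vi)$; no transitivity property is needed, which is exactly why transitivity appears only later for dependences. Concretely, to decide $a\ci b\mid c$ I would first pass to the subgraph of \Greg induced by the anterior set $\mathrm{ant}_{a\cup b\cup c}$, and then form its \emph{augmented} (moralized) undirected graph $G^{m}$ by two operations: (a) for every collision {\sf V} $i\dal\snode\fla k$, $i\fra\snode\fla k$ or $i\dal\snode\dal k$ whose inner collision node lies in $c\cup\mathrm{ant}_{c}$, couple the endpoints $i,k$; and (b) replace every remaining edge by a full line. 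The key claim to be established is the purely graph-theoretic equivalence that an active $ik$-path between $a$ and $b$ given $c$ exists in \Greg if and only if $c$ fails to separate $a$ from $b$ in $G^{m}$, after which ordinary undirected separation finishes both directions.

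For the \emph{sufficiency} direction (no active path $\Rightarrow$ $a\ci b\mid c$ holds in every $f_N$ generated over \Greg), I would argue that marginalizing $f_N$ over $N\setminus\mathrm{ant}_{a\cup b\cup c}$ leaves the relevant conditional unchanged, because the omitted nodes lie purely in the future of $a\cup b\cup c$ and factor out of \eqref{factdens}; the resulting marginal is itself generated over the induced anterior subgraph. This marginal factorizes, and hence is Markov, over $G^{m}$, so the global Markov property of the concentration graph $G^{m}$ — which rests on the intersection property $(vi)$, with composition $(v)$ used to pass between the pairwise form $i\ci k\mid\mathrm{ant}_{ik}$ and the factorized form — turns separation of $a$ and $b$ by $c$ in $G^{m}$ into $a\ci b\mid c$.

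For the \emph{necessity} direction I would use the contrapositive: if an active path exists, then $a$ and $b$ are connected given $c$ in $G^{m}$. One may then either exhibit a regular Gaussian distribution generated over \Greg in which $a\notci b\mid c$, each open {\sf V} contributing a nonvanishing term to the relevant partial covariance so that the whole trail carries a dependence, or, more cleanly, verify that the ternary relation ``$c$ separates $a$ from $b$'' is itself a graphoid (satisfies $(i)$--$(vi)$) containing exactly the pairwise independences of Definition~1; since the independence structure implied by \Greg is the smallest such graphoid, it must then coincide with separation.

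The hard part will be the graph-theoretic equivalence stated above, with the book-keeping forced by the three edge types and, above all, by the clause allowing a collision node to lie anywhere in $c\cup\mathrm{ant}_{c}$ rather than only in $c$. The occurrence of $\mathrm{ant}_{c}$ is precisely what moralization step (a) encodes: conditioning on a descendant of a collision node opens it, which in $G^{m}$ becomes an added coupling among the incoming ancestors. I would prove the equivalence by induction on path length, classifying each inner node of an active path as collision or transmitting and checking that the corresponding pair of edges survives in $G^{m}$ and is not both blocked by $c$; the converse, expanding a $c$-connecting trail of $G^{m}$ back into an active path of \Greg through the collision node that generated each moral edge, is the more delicate bookkeeping. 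A secondary point, to be settled before the whole argument, is that \Greg is closed under marginalization over future nodes, so that the induced anterior subgraph is again a regression graph to which the undirected theory legitimately applies.
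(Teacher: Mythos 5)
The paper itself gives no proof of Lemma~2: it is quoted from Sadeghi (2009), where the path criterion is established directly (by induction within a larger class of mixed graphs obtained from DAGs by marginalizing and conditioning), not by a reduction to undirected separation. Your sufficiency half is the workable part of your plan: marginalizing over the future via the factorization \eqref{factdens}, passing to the anterior subgraph, and invoking the concentration-graph Markov theory through intersection $(vi)$ and composition $(v)$ is sound in outline. But your moralization step (a) is wrong as stated, and this is a genuine gap, not bookkeeping. With dashed (bidirected) edges, collision nodes can be \emph{adjacent} on a path, and one-shot moralization of single collision {\sf V}s does not capture collider paths with two or more inner nodes. Concretely, take joint responses $g_1=\{{\rm o}_1,{\rm o}_2\}$ with ${\rm o}_1 \dal {\rm o}_2$ and regressors $i,k$ with arrows into ${\rm o}_1$ and ${\rm o}_2$ respectively, i.e.\ the path $i \fra \snode \dal \snode \fla k$, and let $a=\{i\}$, $b=\{k\}$, $c=\{{\rm o}_1,{\rm o}_2\}$, $m=\emptyset$. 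Both inner nodes are collision nodes lying in $c$, so the path is active and the graph does not imply $i \ci k|c$; indeed a generic Gaussian family generated over this graph has $i \pitchfork k|c$. Yet your $G^m$ consists of $i\ful {\rm o}_1$, ${\rm o}_1 \ful {\rm o}_2$, ${\rm o}_2\ful k$ plus the two moral edges $i\ful {\rm o}_2$ and ${\rm o}_1 \ful k$, so every path between $i$ and $k$ meets $c$: your criterion declares separation and wrongly certifies the independence. The repair is the augmentation used for ancestral graphs (Richardson and Spirtes): join the endpoints of every \emph{collider path} (all inner nodes collision nodes) in the anterior subgraph, not just of single {\sf V}s --- equivalently, iterate your step (a) to closure; this multi-node closure is exactly what the paper's operator $\zer_b(\wcal_{uu})$, closing dashed $b$-line collision paths in Lemma~7, is designed to do.

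Your necessity direction also needs attention. The ``more cleanly'' graphoid alternative conflates syntactic derivability with semantic implication: showing that the separation relation satisfies $(i)$--$(vi)$ and contains the pairwise statements of Definition~1 only shows that every statement derivable from the pairwise list by those axioms is a separation statement; the lemma's ``implies'' quantifies over all distributions generated over the graph (and the identification of that semantic notion with the axiomatic closure is essentially Lemma~3 of the paper, which you may not presuppose here). So you must carry out your first suggestion, and it is easier than you fear: since the witnessing distribution need only be Markov to the graph, not edge-minimal, set every parameter attached to an edge off the chosen active path to zero; the relevant partial covariance is then a product of nonzero terms along a single path, so the path cancellations you worry about cannot occur. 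A final small correction: the induced subgraph should be taken on $a\cup b\cup c\cup {\rm ant}_{a\cup b\cup c}$, since in the paper's convention the anterior set excludes the nodes themselves.
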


\noindent \begin{lem} {\bf Equivalence of the pairwise and the global Markov property}. {\rm (Sadeghi and Lauritzen, 2012).} The independence  structure of  \Greg is equivalently defined by its lists of the three types of   missing edges  and by  its global Markov property.
\end{lem}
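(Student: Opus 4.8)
The plan is to establish the two inclusions between the pairwise list of Definition~1 and the global independence structure, working throughout with the graphoid properties $(i)$--$(vi)$ only; the transitivity properties $(vii)$ and $(viii)$ play no role here, since they govern dependences rather than the independence structure. I would treat the global-to-pairwise inclusion as routine and concentrate on the pairwise-to-global inclusion, where composition $(v)$ and intersection $(vi)$ are indispensable. By Lemma~2 the global structure is exactly the set of separation statements read off the active-path criterion, so the target is to show that this set coincides with the closure of the pairwise list under $(i)$--$(vi)$.

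For the routine direction I would verify that each independence of Definition~1 is itself a separation statement. For a missing $ik$-edge the endpoints are coupled, if at all, only by paths with at least one inner node, and I would check that with the conditioning set attached to the edge type --- $g_{>j}$ for a missing dashed-line edge, $g_{>j}\setminus\{k\}$ for a missing arrow, and $v\setminus\{i,k\}$ for a missing full-line edge --- every such path is blocked: it must contain a transmitting node lying in the conditioning set, or a collision node that is neither in $c$ nor an anterior of $c$. Lemma~2 then yields the corresponding independence, so the pairwise list sits inside the global structure.

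For the converse the first step is to contract the pairwise statements to the compact anterior form $i \ci k \mid \mathrm{ant}_{ik}$, as signposted by the text preceding the lemma. Starting from $i \ci k \mid g_{>j}$ (and the arrow and full-line analogues), I would strip each non-anterior node $m'$ from the conditioning set: having an auxiliary statement $i \ci m' \mid (g_{>j}\setminus m')$, an application of contraction $(ii)$ followed by decomposition $(iii)$ removes $m'$, while the auxiliary statements themselves are assembled from the remaining pairwise statements using composition $(v)$, and intersection $(vi)$ is used to realign the conditioning sets after each removal. I expect this step to be driven by an induction on the number of non-anterior nodes to be stripped.

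The second step is to derive an arbitrary separation $a \ci b \mid c$, for $\{a,b,c,m\}$ partitioning $N$ with no active path between $a$ and $b$ given $c$, from the anterior-form statements. Here I would induct on the graph, merging the single-edge statements $i \ci k \mid \mathrm{ant}_{ik}$ into the block statement by alternating contraction $(ii)$, weak union $(iv)$ and composition $(v)$, with intersection $(vi)$ again keeping the conditioning sets aligned; the absence of an active path is precisely the combinatorial guarantee that each merging step remains admissible. Invoking Lemma~2 then closes the equivalence. The main obstacle I anticipate is exactly this second step: controlling the bookkeeping of conditioning sets so that every move uses only independences already established, and showing that the active-path condition is strong enough to license the whole sequence of contraction, composition and intersection moves. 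The delicate point is the simultaneous handling of the three edge types together with collision nodes that lie in $\mathrm{ant}_c$ rather than in $c$ itself, which is where the anterior operator and the conditioning set interact most intricately.
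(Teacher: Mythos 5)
The first thing to note is that the paper does not prove this lemma at all: Lemma~3 is imported, with attribution, from Sadeghi and Lauritzen (2012), and the only internal guidance is the remark just before Lemma~2 that intersection $(vi)$ and composition $(v)$ are what allow the pairwise statements of Definition~1 to be compacted to the anterior form $i \ci k\mid \mathrm{ant}_{ik}$. So there is no in-paper argument to match your proposal against line by line; the comparison has to be with the cited source. Your architecture --- treating pairwise-to-global as the hard inclusion, first reducing the pairwise list to anterior form, then assembling arbitrary separations $a \ci b \mid c$ by an induction using $(ii)$--$(vi)$, and correctly observing that $(vii)$ and $(viii)$ play no role for the independence structure --- has the right shape and is consistent both with the paper's hint and with the general strategy of Sadeghi and Lauritzen.

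As a proof, however, the proposal has a genuine gap, and you locate it yourself: the assertion that ``the absence of an active path is precisely the combinatorial guarantee that each merging step remains admissible'' is the entire content of the theorem, not an argument for it. Nothing in the sketch specifies the induction quantity (in the cited source the argument is organized around anterior sets), the order in which nodes of $m$ are marginalized and nodes of $c$ conditioned, or how the three edge types and collision nodes lying in $\mathrm{ant}_{c}\setminus c$ are handled when intersection is invoked to realign conditioning sets --- exactly the point you flag as delicate and then leave open. A secondary soft spot sits in your first reduction step: to strip a non-anterior node $m'$ from $g_{>j}$ you invoke an auxiliary statement of the form $i \ci m' \mid g_{>j}\setminus \{m'\}$, but such statements are not themselves on the pairwise list of Definition~1 and must be produced by the very machinery you are in the middle of building, so the stripping has to be organized as a simultaneous induction with a well-chosen elimination order, which the proposal does not pin down. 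In short: correct skeleton, faithful to what the paper signposts before delegating the result to Sadeghi and Lauritzen (2012), but the load-bearing combinatorial lemma is asserted rather than proved, so the attempt does not yet constitute an independent proof of the statement.
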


  To make {\sf V}s dependence inducing, we take  an edge-minimal regression graph for $f_N$, assume   properties $(i)$ to $(vi)$  and, in addition property $(viii)$, that is singleton  transitivity. We then say \Greg is a  {\bf \em dependence base} for $f_N$ since the implications 
of this type of  graph can be derived with respect to both  independences and dependences. We note first that by enumeration in Definition 1,  the inner node of each collision {\sf V}  is excluded from the defining conditioning set for its endpoints, while  the inner node of each transmitting {\sf V} is 
included in it. This observation is generalized with Lemma 4.
 
 \begin{lem} The conditioning set of any independence statement implied  by \Greg for  the endpoints of any of its {\sf V}s, 
  includes the inner node if it is a transmitting {\sf V}  and excludes the  inner node if it is collision  {\sf V} . \end{lem}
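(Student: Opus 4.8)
The plan is to read the two claims directly off the global Markov property (Lemma 2), applied to the two-edge path that is the {\sf V} itself. Write the {\sf V} with endpoints $i,k$ and inner node $h$, and denote by $P$ the corresponding $ik$-path whose only inner node is $h$. Since a {\sf V} is a three-node induced subgraph with exactly two edges, $i$ and $k$ are non-adjacent in \Greg; this is precisely what makes an independence $i\ci k|S$ possible for the endpoints, so the statement is not vacuous. The whole content of the lemma will come from specialising the definition of an active path to the single-inner-node path $P$.

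First I would fix an arbitrary independence $i\ci k|S$ implied by \Greg and put $a=\{i\}$, $b=\{k\}$, $c=S$, and $m=N\setminus(\{i,k\}\cup S)$, so that $\{a,b,c,m\}$ partitions $N$ and, because $h\neq i,k$, the inner node lies in exactly one of $S$ or $m$. By the global Markov property (Lemma 2) there is no active path between $i$ and $k$ given $S$; in particular $P$ is not active given $S$. Because $P$ has $h$ as its single inner node, only one of the two activeness conditions is non-vacuous, and this is what forces the membership of $h$.

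If the {\sf V} is transmitting, then $h$ is a transmitting node of $P$ and $P$ has no collision node, so $P$ is active precisely when $h\in m$. Since $P$ is not active, $h\notin m$, whence $h\in S$. If instead the {\sf V} is a collision {\sf V}, then $h$ is a collision node of $P$ and $P$ has no transmitting node, so $P$ is active precisely when $h\in S\cup{\rm ant}_S$. Since $P$ is not active, $h\notin S\cup{\rm ant}_S$, and in particular $h\notin S$. This is exactly the assertion: the conditioning set of every implied independence for the endpoints of a {\sf V} contains $h$ when the {\sf V} is transmitting and omits $h$ when the {\sf V} is a collision {\sf V}.

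I do not expect a genuinely hard step: the essential point is that the {\sf V} is itself one of the paths whose inactivity is forced by the assumed independence. The only thing that needs care is the bookkeeping of the partition, namely that $h$ is neither an endpoint nor removable from $\{a,b,c,m\}$, so that $h\notin m$ really does yield $h\in S$ in the transmitting case; and, in the collision case, observing that non-activeness of $P$ delivers the stronger $h\notin S\cup{\rm ant}_S$, of which the claimed $h\notin S$ is simply the weaker half.
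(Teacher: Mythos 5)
Your proof is correct and takes exactly the paper's route: the paper's entire proof is the one-line remark that the statement ``results directly with Lemma 2,'' and you have simply spelled out the specialisation of the active-path definition to the two-edge path formed by the {\sf V} itself, including the partition bookkeeping that turns $h\notin m$ into $h\in S$ in the transmitting case and the weakening of $h\notin S\cup{\rm ant}_S$ to $h\notin S$ in the collision case. Nothing further is needed.
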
 
 
 \begin{proof}  
 The statement results directly with Lemma 2. \end{proof}

Let now a  {\sf V}  in  a dependence base \Greg have  endpoints   $i,k$ and   inner node $\rm o$. Then by Definition 1 und Lemma 4, there is  at least one  $c$  with $c \subseteq  N\setminus \{i,k,\rm o\}$ such that $ i\ci k|c$ is implied if  $(i, {\rm o}, k)$ is a collision  {\sf V} and  $ i\ci k|{\rm o}c$   if  $(i, {\rm o}, k)$ is a transmitting  {\sf V}.
\begin{prop} {\bf Dependence inducing {\bf \sf V}s}.  For  $(i, {\rm o}, k)$ any  {\sf V} of a dependence base \Greg and each $c \subseteq  N\setminus \{i,k,\rm o\}$  
such that this regression graph implies one of   $i\ci k|c$ or  $i\ci k|{\rm oc}$, the following two equivalent statements hold:  \begin{eqnarray*}
 &-& (i, {\rm o}, k) \text{ forms a collision {\sf V} }\iff (i\ci k|c \implies   i \pitchfork k|{\rm o} c),   \\
 &-& (i, {\rm o}, k)   \text{ forms a  transmitting  {\sf V}} \iff (i \ci k|{\rm o} c \implies   i \pitchfork k|c) \,  .
\end{eqnarray*}
\end{prop}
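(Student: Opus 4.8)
The plan is to split the statement into two layers. The easy layer---that which of the two conditionings carries the implied independence already fixes the type of the {\sf V}---is handled by Lemma 4. The substantive layer---that switching ${\rm o}$ in or out of the conditioning set then turns the independence into a strict dependence---is where singleton transitivity, property $(viii)$, does the work. The hinge of the whole argument is the intermediate claim that, for an admissible $c$, \emph{exactly one} of $i\ci k|c$ and $i\ci k|{\rm o}c$ can hold in $f_N$.

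I would obtain the ``at most one'' half of that claim by a single use of $(viii)$ that serves both lines at once. Suppose both $i\ci k|c$ and $i\ci k|{\rm o}c$ were to hold. Applying singleton transitivity with $d=c$ and $h={\rm o}$ gives $i\ci {\rm o}|c$ or $k\ci {\rm o}|c$, that is, one of the two branch independences of the {\sf V}. In the collision line this disjunction is what I assume for contradiction after starting from the known $i\ci k|c$; in the transmitting line it is what I assume after starting from the known $i\ci k|{\rm o}c$. Either way the goal is reduced to excluding both branch independences.

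Excluding them is the main obstacle. Since \Greg is a dependence base, the {\sf V} carries both of its edges $i$--${\rm o}$ and ${\rm o}$--$k$, and by edge-minimality each present edge is a conditional dependence in the sense of Definition 1. The delicate point is that Definition 1 guarantees that dependence only for the conditioning set it attaches to the pair, not for the particular $c$ at hand; so I must upgrade this to $i\pitchfork {\rm o}|c$ and $k\pitchfork {\rm o}|c$. I expect to do so by transporting the Definition 1 dependence to $c$ with composition $(v)$ and intersection $(vi)$, using that an edge is an active path under every conditioning set so that the global Markov property (Lemma 2) never entails a branch independence; the role of $(viii)$ is precisely to forbid the accidental independence that a merely Markovian, non-faithful distribution could otherwise exhibit along a present edge. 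With $i\pitchfork {\rm o}|c$ and $k\pitchfork {\rm o}|c$ established, the disjunction from $(viii)$ is impossible, so not both of $i\ci k|c$ and $i\ci k|{\rm o}c$ can hold.

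It remains to combine this with admissibility and Lemma 4. By hypothesis the graph implies one of the two statements, which by Lemma 2 then holds in $f_N$; together with the ``at most one'' conclusion this gives that \emph{exactly one} of them holds. By Lemma 4 the one that holds is the ${\rm o}$-excluded statement $i\ci k|c$ precisely when $(i,{\rm o},k)$ is a collision {\sf V}, and the ${\rm o}$-included statement $i\ci k|{\rm o}c$ precisely when it is transmitting. Reading this off: a collision {\sf V} forces $i\ci k|c$ and hence, by exactness, $i\pitchfork k|{\rm o}c$, which is the first line; a transmitting {\sf V} forces $i\ci k|{\rm o}c$ and hence $i\pitchfork k|c$, which is the second. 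Since every {\sf V} is either collision or transmitting and never both, the two lines are the two cases of this single dichotomy and are therefore the equivalent statements claimed.
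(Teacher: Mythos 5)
Your overall route is the paper's: Lemma 4 pins down which of the two statements can be graph-implied, and the crux in both your attempt and the paper's proof is applying singleton transitivity $(viii)$ with $d=c$, $h={\rm o}$ to the supposed pair $i\ci k|c$ and $i\ci k|{\rm o}c$, yielding $i\ci {\rm o}|c$ or $k\ci {\rm o}|c$, which must then be refuted through edge-minimality. The genuine gap is the step you yourself flag as delicate and then do not carry out: upgrading the Definition 1 dependences, which are tied to the specific conditioning sets ($g_{>j}$-type sets) attached to each edge, to $i\pitchfork {\rm o}|c$ and $k\pitchfork {\rm o}|c$ for your arbitrary admissible $c$. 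The justification you offer---a present edge is an active path under every conditioning set, so Lemma 2 never entails a branch independence---shows only that the graph does not \emph{imply} $i\ci {\rm o}|c$; to conclude that the dependence then actually holds in $f_N$ is precisely faithfulness, which the paper expressly refuses to assume for traceable regressions. Nor does $(viii)$ ``forbid the accidental independence along a present edge'': the paper explicitly admits path cancellations, under which a coupled pair can be independent given sets other than its defining one. Finally, the proposed composition/intersection transport runs the wrong way: composition applied to $i\ci {\rm o}|c$ and the given $i\ci k|c$ yields $i\ci \{{\rm o},k\}|c$, and weak union then gives $i\ci k|{\rm o}c$---the very statement you are trying to refute---while no combination of $(v)$ and $(vi)$ carries you from $c$ to the defining conditioning set of the $i$--${\rm o}$ edge in a general graph.

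The paper closes this step by anchoring the argument where the defining sets line up. By Lemma 1, all collision {\sf V}s are Markov equivalent and all transmitting {\sf V}s are Markov equivalent, so the case $c=\emptyset$ is treated first, with Definition 1 supplying the branch dependences at exactly the conditioning sets needed: for instance, for $i\dal {\rm o}\fla k$ one has $i\pitchfork {\rm o}|k$, ${\rm o}\pitchfork k$ and $i\ci k$; then $k\ci {\rm o}$ contradicts ${\rm o}\pitchfork k$ outright, while $i\ci {\rm o}$ together with $i\ci k$ gives $i\ci {\rm o}|k$ by composition and weak union, again a contradiction with edge-minimality. Only then is $c\neq\emptyset$ handled, via Lemma 4 and the active-path mechanism of Lemma 2. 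Your ``exactly one of the two independences'' dichotomy and your concluding case analysis are a faithful reading of the proposition, but as written your middle step either smuggles in faithfulness or leaves the decisive branch-dependence claim unproved; the missing resource is the reduction, via Markov equivalence of {\sf V} types, to the representative situation in which Definition 1's defining statements are directly available.
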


 \begin{proof} For $c=\emptyset$,  collision  {\sf V}s are Markov equivalent and   transmitting {\sf V}s are Markov equivalent  by Lemma 1.
    By edge-minimality, both  edges of any  {\sf V}  indicate  conditional dependence for  pairs $i$,o and $k$,o  and by Definition 1, $i\ci k$ holds for an inner collision node and $i\ci k|$o for an inner transmitting node.   Including the inner node of a collision {\sf V} into the conditioning set, or  excluding the inner node of a transmitting {\sf V} from the conditioning set, generates an active path  by Lemma 2.  Such a path induces a dependence unless  singleton transitivity is violated which contradicts an assumption. Similarly for  $c\neq \emptyset$, an independence 
 is  implied by \Greg  if  there is no active path between $i$ and $k$ given $c$ by Lemma 4, but  an active path is generated just as for $c=\emptyset $.\end{proof}

 We can now define sequences of regressions that permit the tracing of pathways  of dependence  for $f_N$ when  $a,b,c,d$ denote  disjoint subsets of $N$ and only $d$ may be  empty. 
 \pagebreak
 \begin{defn}  {\bf Traceable regressions.}  We say $f_N$   results  from traceable regressions if \begin{enumerate}
 \item it could have been generated over a dependence base regression graph, \Greg,
  \item it has three equivalent decompositions of the  joint independence  $b\ci ac|d$ 
$$
(i) (b\ci a|cd   \text{ and } b\ci c|d) ,  \nn 
(ii)     (b\ci a|d \text{ and } b\ci c|d) , \nn 
(iii)  (b\ci a|cd \text{ and } b\ci c|ad)  ,
$$\n \\[-16mm]
\item dependence-inducing  {\sf V}'s  of  \Greg are also  dependence-inducing for $f_N$.  \end{enumerate}
 \end{defn}

 Decompositions $(i)$ to $(iii)$ in Definition 2 combine the previously discussed   properties $(ii)$ to  $(vi)$.
Symmetry of independences, that is property $(i)$,  holds trivially as in all probability distributions. Undirected edges
 correspond to symmetric dependence statements. For each arrow  $i\fla k$ in \Greg, symmetry of dependence  holds only in the following weak sense. From Definition 1 for  $i$  in  $g_j$,   there is some $c \subseteq g_{>j}\setminus k$ with $f_{i|kc}\neq f_{i|c}$ used in the generating process.  Then, for $Y_k$   regressed instead on $Y_i, Y_c$, also  $f_{k|ic}\neq f_{k|c}$.  
 
Notice that traceable regression behave like regular Gaussian families  generated over an edge-minimal \Greg.
  Therefore,  
  for traceable regressions, a violation of set transitivity can occur only when there are at least two paths connecting the same node pair; see  the family of regular Gaussian distributions given above that violates set  transitivity and   for further   examples   Wermuth and Cox (1998).   We call these special types of parametric constellations  {\bf \em path cancellations} as they result for a pair $i,k$  after combining dependences  induced by active $ik$-paths  in  such a way that the joint contributions of all paths cancel.  \\[-9mm]

\section{Applications and illustrations of traceable regressions} 

{\bf Tracing paths}. Whenever  a pathway of dependence is traced  in terms of a graph, one uses implicitly
that every edge present is a strong enough dependence   to be of interest in the given substantive context
and that every {\sf V} along a path is dependence-inducing for its endpoints, since otherwise, no dependence 
is implied for the path endpoints.

Figure \ref{regpain} shows  a well-fitting regression  graph for nine features observed for  patients. The regression   graph  represents a research hypothesis on  the sets of regressors needed for each response  to  generate  the joint distribution. 
In this  example, we use
 data  of Kappesser (1997)  on 201   chronic pain patients, where 
  variable descriptions and detailed statistical analyses are given  in  Wermuth and Sadeghi (2012), not in this paper.
  
  The graph does not  contain any information on the types of the dependence, but 
supplemented by estimates for the dependences,  one can use the graph to interpret pathways of dependences.  
For instance the path $Y, Z_a,A,B$ leads, together with the parameters estimated with linear and logistic models, to the following interpretations. 

    \begin{figure}[H] 
\centering 
\includegraphics[scale=.54]{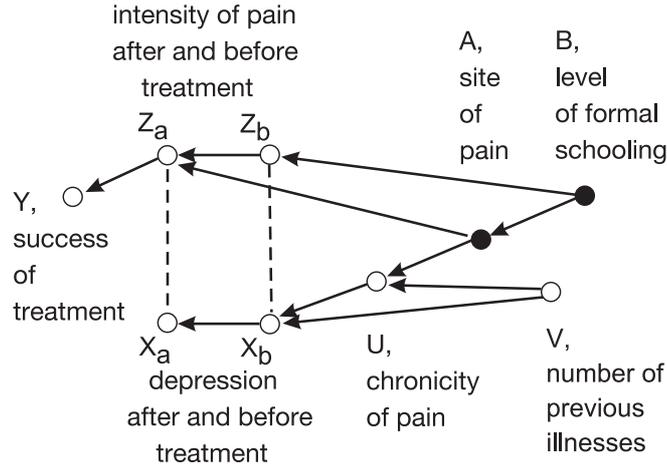}
\caption{\small Regression graph, well compatible with the data and resulting from  statistical analyses.  Binary variables
are indicated  by dots, variables treated as continuous by circles.}  \label{regpain}
 \vspace{-2mm}
\end{figure}

Patients with a higher level of formal schooling are more likely to have  head or neck pain
than back pain. For  patients with head or neck pain, the intensity of pain is better reduced after treatment  than for the back pain patients. 
  For lower pain intensity scores after treatment,    treatment  is the more successful the lower the pain intensity. For higher pain intensity scores  after treatment,  there are no systematic changes in $Y$.


The graphs in Figure \ref{transfpaingraph} are consequences of the generating graph in Figure \ref{regpain}.
 Figure \ref{transfpaingraph}a) implies that site of pain, $A$, would  show  a direct effect on $Y$ if  the two symptoms of chronic pain  before and after treatment were  either not measured or just omitted from the list of potentially important regressors. 
Similarly, chronicity of pain, $U$, would show a  direct effect on $Y$ if,  in addition, site of pain, $A$, is  omitted in \ref{transfpaingraph}b).  

\begin{figure}[H] 
\centering
\includegraphics[scale=.52]{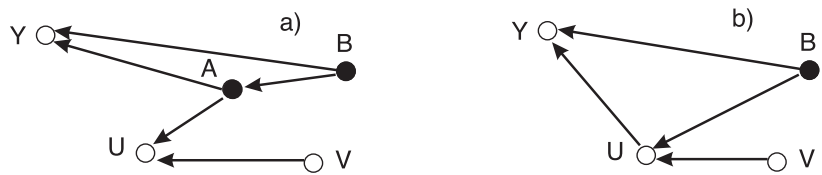} 
\caption{\small The graph of Figure \ref{regpain}  transformed,  preserving the original ordering for the  remaining variables  by a) marginalizing over symptoms before and after treatment, $X_{a}$, $Z_{a}$,  $Z_{b}$, $X_{b}$; b)  marginalizing over symptoms before and after treatment and, in addition, over site of pain, $A$.} \label{transfpaingraph}
\vspace{-3mm}
\end{figure}

To derive and interpret transformed graphs well, such  as those  in Figure \ref{transfpaingraph}, and  more complex  graphs involving both  marginalizing and conditioning, one needs to know the general properties  
of transforming  regression graphs and realize that in general,  induced dependences may not be reflected in significant statistical test results, in 
particular  for small sample sizes or  weak dependences attached  to edges in the generating graph.
 \\[-4mm] 

{\bf Planning future follow-up studies}. To show how tracing of  active paths  may lead to an improved planning of follow-up studies, we use the   generating process, represented by the graph in Figure \ref{jamlar97}, adapted from  Robins and Wasserman (1997), and assume that all those dependences  are strong that
correspond to edges present in the graph.
\begin{figure}[h]\label{jamlar97}
\begin{center}
\includegraphics[scale=.55]{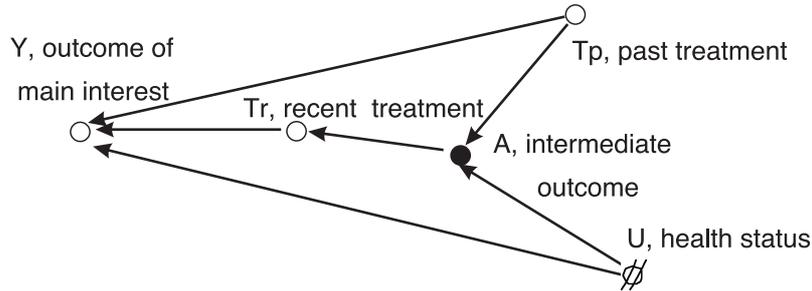}
\caption{\small Generating process in five
variables, missing edge for   $(T_{\rm p}, U)$   due to full randomized allocation of individuals to treatments, and missing edges for $(T_{\rm r}, U)$ and $(T_{\rm r}, T_{\rm p})$ due to randomization conditionally on $A$; $U$ expected to be unobserved in a follow-up study.}
\end{center}
\vspace{-5mm}
\end{figure}

Suppose that in the planned   study, it will be possible to observe all  variables  of Figure 4 except for $U$, because the tools needed to 
diagnose the health status, $U$,  before treatment will not be available. Marginalizing over $U$ is  indicated  in Figure 4 by a crossed out node, $\margn$.
Then  $U$ is excluded from any conditioning set for $Y$, the main response of interest. In general, whenever  no active path is generated, one may proceed safely with  estimating  an  effect, a dependence of  main interest, in the follow-up study.

With $U$ unobserved,  the dependence of  $Y$ on the past  treatment $T_p$ will  always be modified, since  by excluding also the intermediate outcome, $A$, and recent treatment, $T_r$
from the list of regressors, one generates  the active path $Y,T_r, A, T_p$, while by including either $T_r$ or $A$ or both
as  regressors for $Y$, one generates the active path $Y, U, A, T_p$; see Lemma 2.  The former is an example of an overall effect deviating from a conditional effect and the latter is an {\bf \em example of indirect confounding}.

If on the other hand, the dependences of $Y$ on the recent treatment, $T_r$, is of main interest, then $T_p$ is a common ancestor and the
path $Y, T_p, A, T_r$ becomes active by marginalizing over the inner nodes; an {\bf \em example of direct confounding}.
But no active path is  generated between $Y$ and $T_r$ when $A$ and  $T_p$ are  regressors in addition to $T_r$, so that the conditional dependence of $Y$ on $T_r$ given $A,T_p$ can be estimated.

Even though it may in principle  be possible to recover the generating dependence given some distributional assumptions; see e.g. Wermuth and Cox (2008), one needs to obtain  very precise  estimates  to make  any correction worthwhile since  poorly estimated parameters may also lead to bad corrections.


Both types of confounding can also  be detected using graphical criteria  on transformed graphs in reduced node sets, named summary graphs; see Wermuth (2011). For constructing summary graphs by removing repeatedly  single nodes, one needs to  take  into account that any given node can be a collision node on one path and a transition node on another path.  This contrasts with the  graph transformations in this paper, where  different types of active paths are  closed  in sequence.
 \\[-4mm]

 {\bf Examples of small Gaussian regression graph models.}  
We  illustrate next the intersection and the composition property  by describing  two different types of complete regression graphs in three nodes and the associated  saturated models  in the special case of  regular families of  Gaussian distributions for variables standardized to have zero mean and unit variance. Parameters are attached to the edges of the graphs.
Example I shows that the intersection property is implicitly used with backward selections  of important regressors in  multiple regressions  and Example II how the composition property is relevant for  selecting  important regressors in multivariate regressions.\\[-4mm]

\noindent {\bf  Example I:  a  complete single response graph with  two context variables.} The following complete graph in nodes $1,2,3$ \begin{center}
\includegraphics[scale=0.48]{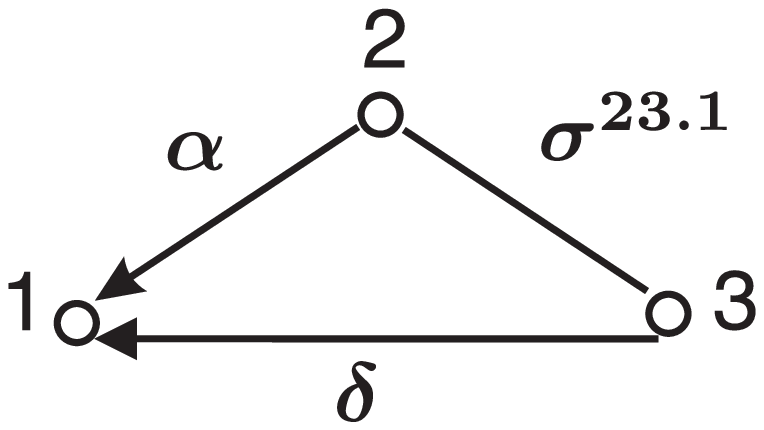}
\end{center}
defines implicitly for  standardized Gaussian variables,  $Y_1,Y_2, Y_3$  three nonzero parameters measuring dependence in
$${ \rm E}( Y_1| Y_2,  Y_3)=\alpha  Y_2 + \delta  Y_3   \nn \nn 
{ \rm E}( Y_2 Y_3)=\rho_{23}\nn \nn \sigma^{23.1}=-\rho_{23}/(1-\rho^2_{23}) \, ,
 $$ 
 where $\rho_{23}$ denotes the marginal correlation  of $Y_2,Y_3$ and $\sigma^{23.1}$ the concentration in their bivariate distribution, that is after marginalizing  over $Y_1$.  For this complete graph,  $\alpha \neq 0$  means $1 \pitchfork 2|3$,  $\delta \neq 0$ means  $1 \pitchfork  3|2$,  and  $\sigma^{23.1}\neq 0$ means $ 2
\pitchfork 3$. With $\alpha=\delta=0$, one requires $1\ci 2|3$ and $1\ci 3|2$ and  removes the 12-edge and the 13-edge from the complete graph so that node 1 remains isolated from  $2 \ful 3$. 
For the resulting   graph, the seemingly obvious interpretation  $1\ci (2,3)$
 requires the intersection property.\\

\noindent{\bf  Example II: a complete joint response graph with a single regressor.} The following complete graph
\begin{center}
\includegraphics[scale=0.48]{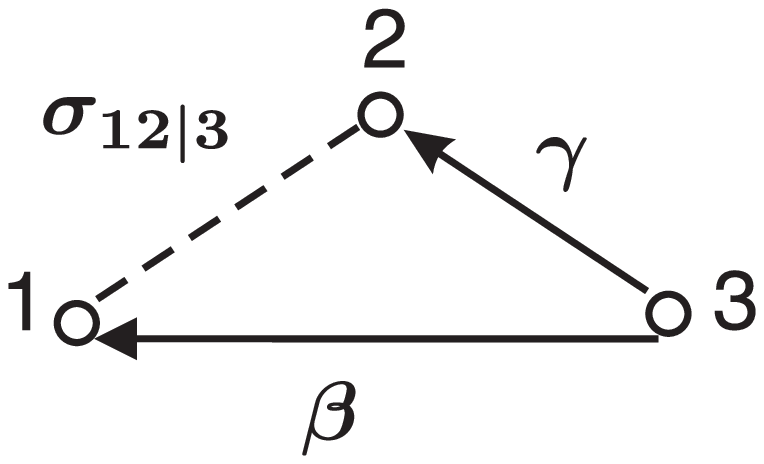}\vspace{-2mm}
\end{center}
defines for  standardized Gaussian variables three non-vanishing parameters, $\beta, \gamma, \sigma_{12|3}$, in
$$  
{ \rm E}(Y_1| Y_3)=\beta   Y_3\nn \nn
{ \rm E}( Y_2|Y_3)=\gamma  Y_3  \nn \nn { \cov}( Y_1 Y_2|Y_3)=\sigma_{ 12|3}\, .
$$ 
Here,  $ \sigma_{12|3} \neq 0$ means  $1  \pitchfork 2| 3,$  $\beta \neq 0$ means   $1 \pitchfork 3,$ and $  \gamma \neq 0$ means $  2 \pitchfork  3.$  With $\beta=\gamma=0$,  one requires $1\ci 3$ and $2\ci 3$ and  removes the 13-edge and the 23-edge from the complete graph so that node 3 remains isolated from  $1 \dal 2$. 
For the resulting   graph, the interpretation  $(1,2)\ci 3$
requires the composition property. \\[-4mm]


{\bf Standard properties for combining independences.}  Properties $(ii)$ to $(iv)$  that are common to all probability distributions with a given density, are illustrated next by using
 the    directed acyclic graphs in the three ordered nodes
$(1,2,3)$ shown in Figure \ref{inc3dags}, again for standardized Gaussian distributions. \\[-4mm]

\noindent{\bf Example III:  a complete directed acyclic graph.} The complete graph in nodes $1,2,3$ of Figure \ref{inc3dags}a)
gives for  standardized Gaussian variables three nonzero parameters, $\alpha, \delta, \gamma$,  measuring dependence in
$${ \rm E}(Y_1| Y_2, Y_3)=\alpha  Y_2 + \delta  Y_3,  \nn \nn 
{ \rm E}(Y_2| Y_3)=\gamma  Y_3, \nn \nn  {\rm E}(Y_3)=0\,,
$$\
where $\alpha \neq 0$ means $1\pitchfork 2|3$,   $\delta \neq 0$ means $1\pitchfork 3|2$, and $\gamma \neq 0$ means $ 2
\pitchfork 3$. 

\begin{figure}[H]
\centering
\includegraphics[scale=0.43]{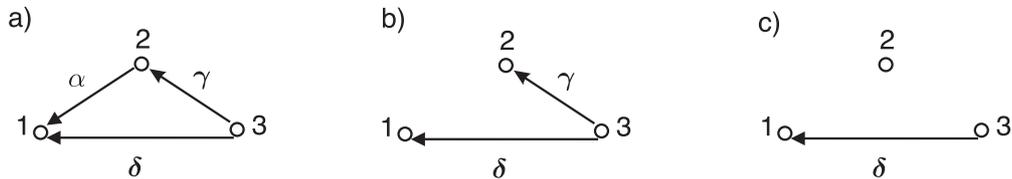}
 \caption[]{\small{Directed acyclic graphs in 3 nodes with parameters in  standardized Gaussian distributions attached to the edges; a) the complete graph, b) the graph implying $1\ci2|3$, c) the graph implying $2\ci (1,3)$.}}
  \label{inc3dags}
\end{figure}

The interpretation  of $\delta$ changes to $\delta \neq 0$ means $1 \pitchfork 3$ in Figure 5b) where $1\ci 2|3$ is implied by the graph. This reflects that a 
different family of distributions is generated when the $12$-edge is removed.
The graphs define implicitly the factorizations of $f_N$ in equation \eqref{factdens}, respectively, as
$$ f_{123}=f_{1|23}f_{2|3} f_{3}, \nn \nn  (f_{123}=f_{1|3}f_{2|3} f_{3}) \implies 1\ci 2|3, \nn \nn  (f_{123}=f_{1|3}f_{2} f_{3})\implies  2\ci (1,3)\,.$$ 

The  factorization of a joint density as specified with a complete directed acyclic graph is formally always possible. 
Independence constraints imposed in sequence on two consecutive factors of $f_{123}$ generated as in Figure \ref{inc3dags}a),  such as $1\ci 2|3$ constraining $f_{1|23}=f_{1|3}$ changes the triangle in the graph of Figure \ref{inc3dags}a)  to a {\sf V} 
in Figure \ref{inc3dags}b)    and 
$2\ci 3$ constraining $f_{2|3}=f_2$  creates next an 
isolated node 2 and  $1\fla 3$,   in Figure \ref{inc3dags}c). 

The removal of the two arrows gives one direction of the contraction property, starting from the factorization to  Figure \ref{inc3dags}c) gives the other direction.
Given the factorization of any density to Figure \ref{inc3dags}c), marginalizing over $Y_3$ leaves $f_{12}=f_1f_2$
and marginalizing over $Y_1$ gives directly $f_{23}=f_2 f_3$ that is decomposition, while conditioning on
$Y_2,Y_3$ leaves directly $f_{1|23}=f_{1|3}$ and conditioning on $Y_1,Y_2$ gives $f_{3|12}=f_{3|1}$ that
is weak union.   Also in more complex situations, these three properties, $(ii)$, $(iii)$, $(iv)$, common to all probability distributions, can be derived  by transforming factorized densities.
\\[-9mm]

 \section{Violating  properties  of  traceable regressions.}
  Some small  discrete families of distribution  are given that are not traceable regressions. These may be extended and many similar
 families may be constructed.

 {\bf Violation of singleton transitivity.} As mentioned before, singleton transitivity is  satisfied in all regular Gaussian distributions and  in all  binary distributions.
 But the  following discrete family of distributions  for a $2\times 2\times 3$   contingency table 
 violates  singleton transitivity. It is  adapted from Birch (1963), equation (5.4).   We write $\pi_{ijk}$ for the joint probabilities
 of variables $A, B,C$ at levels  $i,j, k$  and e.g. $\pi_{+jk}=\txt{\sum_i}\pi_{ijk}$.
 Conditional probabilities e.g. for $A$ given $B, C$ are $\pi_{i|jk}=\pi_{ijk}/\pi_{+jk}$.\\[-6mm]

\begin{table}[H]  
\caption{A family of distributions that violates singleton transitivity}\vspace{-3mm}
 \begin{center}
{\small $4 \pi_{ijk}(1+\alpha+\alpha^2)$, $\alpha>1$\\
 \begin{tabular}{l cc c cc c  cc  rrr  rr} 
 \cline{1-9}
&\multicolumn{2}{c}{$C: \nn k=1$}&&
\multicolumn{2}{c}{$k=2$}&&\multicolumn{2}{c}{$k=3$}&&& \\
\cline{2-3} \cline{5-6}   \cline{8-9} 
{$\mathbf  \sf  A /B:$}& $j=1$&$j=2$&&$j=1$ &
$j=2$&&$j=1$&$j=2$ &&&\\ 
\cline{1-9}  
$i=1$& $\alpha^2$ & $\alpha$ & &$\alpha$ & 1 &&1&$\alpha^2$ &&&\\
$i=2$ &$\alpha$ &1 && $\alpha^2$ & $\alpha$&&1&$\alpha^2$&&& \\
\cline{1-9} 
odds-ratio & \multicolumn{2}{c}{\fourl 1} && \multicolumn{2}{c}
{\fourl 1}&&\multicolumn{2}{c}{\fourl 1}&&& \\
 \cline{1-9}  
 \end{tabular}}
\end{center}
 \vspace{-6mm}
 \end{table}

Here,  the  conditional odds ratios being 1 imply that $A\ci B|C$ and the marginal probabilities of  $A,C$ and of $B, C$ 
show  that $A\pitchfork C$ and $B \pitchfork C$. Nevertheless, also $A\ci B$ since $$\txt{\sum_k}\, \pi_{i+k} \, \pi_{+jk}/\pi_{++k}=\pi_{i++}\pi_{+j+},$$
 a very special constellation discussed first by Darroch (1962) and  generalized 
by  Wermuth and Cox (2004), section 7, to general types of distributions that are also not dependence inducing.
Though one can construct families of distributions with such peculiar parametric constraints, it is difficult to imagine that they could capture a structure of interest in any substantive context when studying sequences of regressions.
 
 In a generating process of $f_N$,  singleton transitivity can be achieved when the individual regressions are
 permitted to vary independently of the other response components and of their common past.  This is reached,  in particular,  when the family to a complete graph has a rich enough parametrisation and only the independence  constraints of Definition 1 are imposed on \Greg. 
 \\[-4mm]

 {\bf Violation of the intersection property.} The intersection property is  always  satisfied by positive distributions  and in all in regular Gaussian distributions,  even though  the known necessary and sufficient conditions are less restrictive; see San Martin, Mouchart and Rolin (2005).
The following discrete family of distributions for a $2\times 2\times 3$ 
 contingency table violates    the intersection property. This happens whenever a pair of variables 
shares some common information. For three binary variables,  violation of the
 intersection property coincides with  the degenerate case of two variables being identical. \\[-6mm]
 \begin{table}[H]  
\caption{A family of distributions that violates the intersection property}\vspace{-3mm}
 \begin{center}
{\small $3\pi_{ijk}$, $0<\alpha\neq \beta<1,$ $2\alpha+\beta<1$\\\
 \begin{tabular}{l cc c cc c  cc  c cc  cc cc} 
 \cline{1-9}
&\multicolumn{2}{c}{$C: \nn k=1$}&&
\multicolumn{2}{c}{$k=2$}&&\multicolumn{2}{c}{$k=3$} \\
\cline{2-3} \cline{5-6}   \cline{8-9} 
{$\mathbf  \sf  A /B:$}& $j=1$&$j=2$&&$j=1$ &
$j=2$&&$j=1$&$j=2$ &\\ 
\cline{1-12} 
$i=1$& $\alpha$ & $0$ & &$\alpha$ & 0 &&0&$\beta$ &\\
$i=2$ &$1-\alpha$ &0 && $1-\alpha$ & 0&&0&$1-\beta$& \\
\cline{1-9} 
 \end{tabular}}
\end{center}
 \vspace{-6mm}
\end{table}

In the family shown in Table 2, $A\ci B|C$ and $A\ci C|B$, since 
$$\pi _{i|jk}=\pi_{i|k} \text{  and } \pi_{i|jk}=\pi_{i|j}$$
but $A\dep BC$. More precisely, $A\dep B$ since $\pi_{i|j}\neq \pi_{i}$ and
$A\dep C$ since $\pi_{i|k}\neq \pi_{i}$. The marginal joint distribution of $B,C$ shows
the type of common information shared by variables  $B$ and $C$. Variable
$B$ taking on level 1  coincides with $C$ taking on  value 1 or 2 and  $B$ being at level 2
coincides with $C$ being at level 3.

Thus, when the joint  distribution of $B,C$ had been generated by first knowing 
the distribution of variable $C$  and then generating the conditional distribution of $B$ given $C$, the levels of variable $B$ are  not  permitted to vary freely and thereby lead to the violation of  the intersection property.
 \\[-4mm]

 {\bf Violation of the composition property.}  The composition  property is  always  satisfied in regular Gaussian distributions and in multivariate symmetric binary distributions generated over directed acyclic graphs; see Wermuth, Marchetti and Cox (2009).
On the other hand, it is always violated when   pairwise independences do not imply mutual independence.

The following  binary family of distributions for a $2 \times  2 \times  2$ contingency table also violates the composition  property.
In this family, there is a  log-linear three-factor interaction since the conditional odd-ratios
for $A,B$ differ at the two levels of $C$.\\[-6mm] 
 \begin{table}[H]  
\caption{A family of distributions that violates the composition property}\vspace{-3mm}
  \begin{center}
{\small $8\pi_{ijk}$,  $0<2\alpha<1$\\\
 \begin{tabular}{l cc c cc c  cc  c cc  cc cc} 
 \cline{1-6}
&\multicolumn{2}{c}{$C: \nn k=1$}&&
\multicolumn{2}{c}{$k=2$} \\
\cline{2-3} \cline{5-6}    
{$\mathbf  \sf  A /B:$}& $j=1$&$j=2$&&$j=1$ &
$j=2$\\ 
\cline{1-12} 
$i=1$& $1+2\alpha$ & $1-2\alpha$ & &$1$ &$ 1$ \\
$i=2$ &$1-2\alpha$ & $1+2\alpha$ &&$1$& $1$  \\
\cline{1-6} 
odds-ratio & \multicolumn{2}{c}{$\{(1+2\alpha)/(1-2\alpha)\}^2$} && \multicolumn{2}{c}
{\fourl 1} \\
 \cline{1-6}  
 \end{tabular}}
\end{center}
 \vspace{-6mm}
\end{table}
More precisely, at level 2 of $C$, the variables $A,B$
are  independent while the dependence of this pair is strong at level 1 of $C$  whenever $\alpha$ is large.
At the same time, the marginal $AC$ and $BC$ tables reveal that $A\ci C$ and $B\ci  C$.

 Thus,  when regressing  the two components of a  joint response  $AB$ separately on $C$, one sees  no 
 separate effects, but the conditional dependence   of $A$ on $B$ changes  with the levels of $C$. 
 This  type of structure could in particular not be generated by a single  unobserved common explanatory variable or if all sets of variable with higher-order effects also have main effects in the regressions, that is  lowest order interactions.

 
 With a pragmatic strategy  for model selection in which one  checks for higher order interactions only when there  are also main effects, one may  overlook  such structures that could  be of substantlve interest. 
  For sequences of discrete joint responses, the violation will be detected when using the parametrization suggested by  Marchetti and Lupparelli (2011). 
 In general, the graphical  checks for nonlinearities and interactions, as proposed  by    Cox and Wermuth (1994),  provide some protection, but only for effects
 that are detectable also  in marginal trivariate distributions. \\[-9mm]


\section{Transforming regression graphs} 

 The transformations of regression graphs to be introduced, are based on binary matrix  representations of \Greg.
 Our notation for these  edge matrices mimics  the one for parameter matrices in
Gaussian
sequences of regressions generated over the graph. There are   one-to-one correspondences between a zero in an edge  
matrix, a vanishing parameter in the regular Gaussian family of distributions and a conditional independence statement.\\[-4mm]

 {\bf  Linear 
sequences of regressions.} For a mean-centered  vector variable $Y_N$ with a regular Gaussian distribution generated over \Greg with a split $N=(u,v)$, the matrix of equation parameters, denoted by  $H_{NN}$, is  upper block-triangular and 
$$H_{NN}Y_N=\eta_N \text{ with }  W_{NN}=\cov(\eta_N) \text{ block-diagonal in the sizes of } g_j,$$
  where the submatrix of $H_{uu}$ in rows $g_j$ and columns $g_{>j}$ is $-\Pi_{g_j|g_{>j}}$, the negative of the population least-squares coefficient matrix obtained when regressing $Y_{g_j}$ on $Y_{>g_j}$.  The square diagonal submatrices in the sizes of $g_j$ are identity matrices. The submatrix $H_{vv}$  is  the marginal concentration matrix of $Y_v$, denoted by $\Sigma^{vv.u}$. This  implies $W_{vv}=\Sigma^{vv.u}$. The square  submatrices of $W_{uu}$ are
   $\Sigma_{g_jg_j|g_{>j}}$, the conditional covariance matrices of $Y_{g_j}$ given $Y_{>g_j}$. For  just two
   connected components $a,b$  the  parameter matrices are     
$$ H_{NN}=\begin{pmatrix} I_{aa} & -\Pi_{a|b.v} & -\Pi_{a|v.b}\\ 0_{ba} & I_{bb} & -\Pi_{b|v}\\
0_{va} & 0_{vb} &  \Sigma^{vv.ab} \end{pmatrix}\nn \nn  W_{NN}=\begin{pmatrix} \Sigma_{aa|bv}& 0_{ab}& 0_{av}\\ 0_{ba}& \Sigma_{bb|v}& 0_{bv}\\0_{va}& 0_{vb}& \Sigma^{vv.ab} \end{pmatrix},$$
where we use a Yule-Cochran notation for  $\Pi_{a|bv}$, the regression coefficient matrix of $Y_a$  regressed on $Y_b$, $Y_u$,  for instance $0_{ba}$ denotes a matrix of zeros, and $I_{bb}$ an identity matrix.

 For any split of  $N=(a,b)$,  to obtain  $f_{a|b}f_{b}$
we let $c=a\cap u$,   $d=b\cap u$, and get
 $$K_{NN}=\begin{pmatrix} H_{aa}^{-1} &-H_{aa}^{-1}H_{ab}  \\ H_{ba} H_{aa}^{-1}\;& H_{bb}-H_{ba}H_{aa}^{-1}H_{ab}
\end{pmatrix}\nn \nn  Q_{uu}=\begin{pmatrix} W_{cc}-W_{cd}W_{dd}^{-1}W_{dc}&W_{dd}^{-1}W_{dc}\\-W_{dd}^{-1}W_{dc}&
 W_{dd}^{-1}
 \end{pmatrix},$$
  by partial inversion of $H_{NN}$ with respect to $a$ and by partial inversion of $W_{uu}$ with respect to $b$;
 see for instance Marchetti and Wermuth (2009), Appendix 1.

\begin{lem} {\bf Orthogonalised linear equations.} \!\!{\rm (Wermuth and Cox (2004), Thm 1.)}  The Gaussian density 
$f_N=f_{u|v}f_v$ generated over \Greg is for any split $N=(a,b)$ transformed into $f_N=f_{a|b}f_b$
with $E(Y_a|Y_b)=\Pi_{a|b}$, $\cov(Y_a|Y_b)=\Sigma_{aa|b}$, ${\rm con}(Y_b)=\Sigma^{bb.a}$ as
  \begin{equation} \Pi_{a|b}=\In[K_{ab}+K_{aa}Q_{ab}K_{bb}], \label{ipab}\end{equation}
  \begin{equation} \Sigma_{aa|b}=\In[K_{aa}Q_{aa}\kcal_{aa}\T], \nn \nn \Sigma^{bb.a}=\In[H_{bb}\T Q_{bb}H_{bb}]\label{isym}.\end{equation}  \end{lem}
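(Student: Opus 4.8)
The plan is to realize the passage from the generating factorization $f_N=f_{u|v}f_v$ to the target factorization $f_N=f_{a|b}f_b$ as a purely algebraic rearrangement of the linear generating equations $H_{NN}Y_N=\eta_N$, driven by two partial inversions: one of $H_{NN}$ on $a$, producing $K_{NN}$, and one of the noise covariance $W_{uu}$ on $d=b\cap u$, producing $Q$. Since $Y_N=H_{NN}^{-1}\eta_N$ with $\cov(\eta_N)=W_{NN}$, the joint covariance is $\Sigma_{NN}=H_{NN}^{-1}W_{NN}H_{NN}^{-\T}$ and the joint concentration is $\Sigma^{NN}=H_{NN}\T W_{NN}^{-1}H_{NN}$. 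First I would observe that the three target matrices $\Pi_{a|b}$, $\Sigma_{aa|b}$, $\Sigma^{bb.a}$ are exactly the $(a,b)$-, $(a,a)$- and $(b,b)$-blocks obtained by partially inverting $\Sigma_{NN}$ on $b$; this is just the regression decomposition of a mean-centered Gaussian for the split $(a,b)$, and it is available for any split because any joint Gaussian factorizes as $f_{a|b}f_b$. Hence it suffices to express that single partial inversion of $\Sigma_{NN}$ through the two separate inversions of $H_{NN}$ and of $W_{uu}$.

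Next I would carry out the two inversions explicitly and read off the coefficients. Partially inverting the system $H_{NN}Y_N=\eta_N$ on $a$ interchanges $Y_a$ with $\eta_a$, giving $Y_a=K_{aa}\eta_a+K_{ab}Y_b$ together with $\eta_b=K_{ba}\eta_a+K_{bb}Y_b$; thus $Y_a$ now appears as a response regressed on $Y_b$, but its residual $K_{aa}\eta_a$ is still correlated with $Y_b$ through the correlation of $\eta_c$ and $\eta_d$ inside $W_{uu}$, where $c=a\cap u$. The second inversion removes exactly this correlation: partially inverting $W_{uu}$ on $d$ orthogonalizes $\eta_c$ against $\eta_d$, replacing $\eta_c$ by a residual uncorrelated with $\eta_d$ of covariance $Q_{cc}$, with regression coefficient $Q_{cd}$ and residual precision $Q_{dd}$; under the embedding of $Q$ into $N$ these are the blocks written $Q_{aa},Q_{ab},Q_{bb}$ in the statement. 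Substituting the orthogonalization makes the residual of the $Y_a$-equation orthogonal to $Y_b$, whereupon the coefficient of $Y_b$ reads off as $K_{ab}+K_{aa}Q_{ab}K_{bb}$, which is \eqref{ipab}; the covariance of the orthogonalized residual reads off as $K_{aa}Q_{aa}K_{aa}\T$, the first part of \eqref{isym}; and the equation for $\eta_b$, once its noise is orthogonalized and re-expressed as a precision, yields the marginal concentration $\Sigma^{bb.a}=H_{bb}\T Q_{bb}H_{bb}$, the second part of \eqref{isym}. The operator $\In[\cdot]$ then merely selects the block carrying these induced coefficients.

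Two structural features of the parametrization do the real work and I would verify them once, at the outset. First, $W_{NN}$ is block-diagonal with $W_{uv}=0$, so the response noise and the context noise never mix; this is why only $W_{uu}$ needs inverting and why the context part enters $\Sigma^{bb.a}$ directly through $H_{bb}$ rather than through a further inversion. Second, $H_{NN}$ is upper block-triangular with identity diagonal blocks on the $g_j$ in $u$ and $H_{vv}=\Sigma^{vv.u}$ on $v$, so the $b$-restricted map $H_{bb}$ already captures the recursive structure internal to $b$. Before asserting the general pattern I would pin down signs and the precise off-diagonal of $Q_{uu}$ on the two smallest cases, a single arrow $i\fla k$ and a single covariance edge $i\dal k$, where $K$ and $Q$ reduce to scalars and the three identities are immediate.

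The main obstacle is the combination of index bookkeeping across the $(u,v)$ split with the rigorous justification that the one partial inversion of $\Sigma_{NN}$ on $b$ factors into the two separate inversions of $H_{NN}$ on $a$ and of $W_{uu}$ on $d$. This rests on the algebra of partial inversion recorded in Marchetti and Wermuth (2009) — it is an involution, and inversions on disjoint index sets compose to an inversion on their union — applied to the congruence $\Sigma_{NN}=H_{NN}^{-1}W_{NN}H_{NN}^{-\T}$. Establishing this propagation through the sandwich cleanly, while tracking which assembled blocks are covariances and which are concentrations, is the delicate step; everything else is the substitution and reading-off already sketched, and the conclusion is the cited Theorem 1 of Wermuth and Cox (2004).
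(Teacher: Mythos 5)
The paper gives no proof of this lemma: it is quoted verbatim from Wermuth and Cox (2004), Theorem 1 (modulo transcription slips there, e.g.\ the stray $\In[\cdot]$ and $\kcal_{aa}\T$ for $K_{aa}\T$, which you sensibly read through). So your sketch must stand on its own, and its overall plan --- reduce to the blocks of the partial inversion of $\Sigma_{NN}$ on $b$, then read coefficients off the partially inverted system $Y_a=K_{aa}\eta_a+K_{ab}Y_b$, $\eta_b=K_{ba}\eta_a+K_{bb}Y_b$ --- is indeed the spirit of the cited derivation and is correct as far as it goes. The central step, however, fails as stated: orthogonalizing $\eta_c$ against $\eta_d$ via $\zer_d(W_{uu})$ does not make the residual of the $Y_a$-equation orthogonal to $Y_b$, because $Y_b$ is a linear function of $\tilde\eta_b=\eta_b-K_{ba}\eta_a$, not of $\eta_b$. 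The residual--regressor covariance is $\cov(\eta_a,\tilde\eta_b)=W_{ab}-W_{aa}K_{ba}\T$, and your inversion removes only the $W_{ab}$ part. Concretely, take $N=u=\{1,2\}$ with the single arrow $1\fla 2$, so $Y_1=\lambda Y_2+\eta_1$, $Y_2=\eta_2$, $W=\diag(w_1,w_2)$, and choose the split $a=\{2\}$, $b=\{1\}$. Then $K_{ab}=0$, $K_{ba}=-\lambda$, $W_{ab}=0$, hence $Q_{ab}=0$ and your recipe returns $\Pi_{a|b}=K_{ab}+K_{aa}Q_{ab}K_{bb}=0$, whereas the true coefficient is $\lambda w_2/(w_1+\lambda^2w_2)\neq 0$. (Had you run your own proposed smallest-case check of a single arrow in both orientations of the split, this would have surfaced.) Making the extra term $W_{aa}K_{ba}\T$ vanish or telescope requires exploiting how the split sits relative to the generating order, together with the triangularity of $H_{uu}$, the block-diagonality of $W_{uu}$ in the $g_j$, and the identity $H_{vv}=W_{vv}=\Sigma^{vv.u}$; that bookkeeping is precisely the content of the cited theorem and is absent from your sketch.

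Two further facets of the same gap. First, your appeal to $W_{uv}=0$ to conclude that ``only $W_{uu}$ needs inverting'' overlooks that $W_{vv}=\Sigma^{vv.u}$ is a full concentration matrix, so when the split cuts $v$ the context noises $\eta_{a\cap v}$ and $\eta_{b\cap v}$ are correlated and $\zer_d(W_{uu})$ leaves this untouched; it is absorbed only through the self-consistency $H_{vv}=W_{vv}$, which you never invoke. Second, the mechanism you describe naturally yields $\Sigma^{bb.a}=K_{bb}\T[\cov(\tilde\eta_b)]^{-1}K_{bb}$, whereas the statement has $H_{bb}\T Q_{bb}H_{bb}$; bridging $K_{bb}$ to $H_{bb}$ needs the same structural cancellations. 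Finally, the tool you cite for the step you yourself flag as delicate --- that partial inversions on disjoint index sets compose --- is a property of repeated inversion of one matrix, $\zer_a\zer_b M=\zer_{a\cup b}M$, and cannot by itself factor the single inversion of $\Sigma_{NN}=H_{NN}^{-1}W_{NN}H_{NN}^{-\T}$ on $b$ through $\zer_a H_{NN}$ and $\zer_d W_{uu}$, which act on different matrices. As it stands, the proposal establishes the formulas only in the cross-term-free cases and is refuted in general by the two-node example above.
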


 \n \\[-4mm]
    {\bf  \nn \nn The edge matrices of regression graphs.} Edge matrices are binary matrix representations of graphs. They are symmetric for undirected graphs, upper block-triangular for arrows in a generating \Greg and upper-triangular for 
directed acyclic graphs. 
The essential change  compared to  the more traditionally used adjacency matrices is that ones are added along the diagonal of each square matrix. This has the effect that  sums of matrix products are  well-defined and can represent the closing of special types of path in graphs; such as in   equations \eqref{indpab} and  \eqref{indsym} below. \\[-4mm]

 Regression graphs have  three types of edge sets,
$E_{\flas}$, $E_{\dals}$, and $E_{\fuls}.$   The  edge matrix components of \Greg are 
 a $d_N \times d_N$ upper  block-triangular
matrix $\hcal_{NN}= (\hcal_{ik})$ such that
\begin{equation}
\label{hcal}
\hcal_{ik} =
\begin{cases}
 1 & \text{ if  and only if } i \fla  k \text{ or } i\ful  k \text{ in } G^{N}_{\rm reg} \text{ or } i = k, \\
 0 & \text{ otherwise,} \\
 \end{cases}
\end{equation}
and a $d_u \times d_u$ symmetric matrix $\wcal_{uu} = (\wcal_{ik})$ such that
\begin{equation}
\label{wcal}
\wcal_{ik} =
\begin{cases}
 1 & \text{ if  and only if } i \dal  k  \text{ in } G^{N}_{\rm reg} \text{ or } i = k, \\
 0 & \text{ otherwise,} \\
 \end{cases}\end{equation}
where, $E_{\dals}$ corresponds to  $\wcal_{uu}$, $E_{\fuls}$ to $\hcal_{vv}$, and
$E_{\flas}$ to $\hcal_{uN}$. 

Every regression graph \Greg can be represented by its edge matrices given in equations \eqref{hcal} and \eqref{wcal}.
Every dependence base \Greg defines in particular  a  corresponding  family of   Gaussian  regressions in which each edge present can be identified  by a single non-vanishing parameter, an off-diagonal element of $H_{NN}$ or $W_{u,u}$.\\[-4mm]

{\bf Partial  closure of paths.} 
Partial closure,  introduced by Wermuth, Wiedenbeck and Cox (2006),  is a matrix operator, denoted by $\zer_a(\cdot)$ which acts on row and collums  $a$ of a binary  matrix.  It is applied to  edge matrix representations of a  starting graph  in node set $N$ to give  the edge matrix representations of a new graph in which 
there is an additional $ik$-edge for a  pair $i,k$  that is in the starting graph  uncoupled but  connected by a specific type of edge-inducing
 $a$-line path.


With partial closure, the set of nodes, node labels, and edges present in the starting graph,  are  preserved in the transformed graph so that the mappings  are graph  homomorphisms; for this notion see Hell and Ne\v{s}et\v{r}il  (2004), for corresponding
reparametrizations of exponential families see Wiedenbeck and Wermuth (2010).

\begin{lem} {\bf Basic properties of partial closure.} {\rm (Wermuth, Wiedenbeck and Cox, (2006)).}   Partial closure is $(i)$ commutative, $(ii)$ cannot be undone and $(iii)$ is exchangeable with selecting a submatrix. \end{lem}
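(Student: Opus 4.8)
The plan is to reduce all three properties to a single combinatorial reading of the operator and then read them off. Recall that, with the diagonal filled by ones, the block $M_{aa}$ has a well-defined Boolean reflexive--transitive closure $M_{aa}^{\sharp}=\bigvee_{t\ge 0}M_{aa}^{t}$, and partial closure is
\[
\zer_a(M)=\begin{pmatrix} M_{aa}^{\sharp} & M_{aa}^{\sharp}M_{ab}\\ M_{ba}M_{aa}^{\sharp} & M_{bb}\vee M_{ba}M_{aa}^{\sharp}M_{ab}\end{pmatrix},
\]
all sums and products being Boolean. First I would prove the auxiliary statement that drives everything: $(\zer_a(M))_{ik}=1$ if and only if the graph of $M$ contains an \emph{$a$-line $ik$-path}, that is a path whose every inner node lies in $a$, the degenerate zero-inner-node case being the edge $ik$ itself. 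This is a direct unfolding of the Boolean products above: a nonzero entry of $M_{aa}^{\sharp}$ records a chain running inside $a$, multiplying on the left and right by $M_{ba}$ and $M_{ab}$ attaches the two edges leaving $a$, and the $(b,b)$-block adds such through-$a$ connections to edges already present. I would check this separately for the symmetric block $\wcal$ (undirected edges) and for the block-triangular block $\hcal$ (containing the arrows), since in the latter the path must respect edge orientation.

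Granting the characterisation, commutativity $(i)$ becomes transparent. Every edge of $\zer_b(M)$ is either an edge of $M$ or a $b$-line path of $M$, so an $a$-line path in $\zer_b(M)$ expands into a path of $M$ whose inner nodes all lie in $a\cup b$. Conversely any $(a\cup b)$-line path of $M$ splits into maximal runs through $b$ separated by nodes of $a$; each such run (possibly empty) is an edge of $\zer_b(M)$, while the separating $a$-nodes are exactly the inner nodes of the resulting $a$-line path in $\zer_b(M)$. Hence $\zer_a(\zer_b(M))=\zer_{a\cup b}(M)$, and since the right-hand side is symmetric in $a,b$ the same expression equals $\zer_b(\zer_a(M))$; here I use $a\cap b=\varnothing$, which is what makes the run-decomposition well defined. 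I expect this expansion-and-regrouping step to be the main obstacle: making it fully rigorous at the level of the Boolean matrix products (equivalently, verifying the Kleene block-star identity) and keeping the orientation bookkeeping correct for $\hcal$. Everything else is routine once the path characterisation is in place.

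For $(ii)$ I would record two facts. Monotonicity, $\zer_a(M)\ge M$ entrywise, is immediate because each edge of $M$ is its own zero-inner-node $a$-line path. Idempotence, $\zer_a(\zer_a(M))=\zer_a(M)$, follows from the characterisation because an $a$-line path whose edges are themselves $a$-line paths is again an $a$-line path, or algebraically from $(M_{aa}^{\sharp})^{\sharp}=M_{aa}^{\sharp}$. Together these say $\zer_a$ is a genuine closure operator that only ever adds edges, so whenever some $a$-line path is actually inducible the map identifies $M$ with its closure and is therefore non-injective; hence it admits no inverse, which is the precise content of ``cannot be undone,'' in contrast with partial inversion, which is an involution and so reversible.

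For $(iii)$, let $c\supseteq a$ and let $M[c,c]$ denote the principal submatrix on $c$. I would show $(\zer_a(M))[c,c]=\zer_a(M[c,c])$ by applying the characterisation on both sides: an $a$-line path between two nodes of $c$ has all its inner nodes in $a\subseteq c$, so the whole path lies within $c$ and is seen identically whether one first closes and then restricts, or first restricts and then closes. The only hypothesis needed is $a\subseteq c$, so that restriction never deletes a node that could serve as an inner node of an $a$-line path; I would state this containment explicitly, since it is the clause that makes selecting a submatrix exchangeable with $\zer_a$.
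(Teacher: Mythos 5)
Your argument is correct, but note that there is no proof in the paper to compare against: Lemma 6 is quoted, with attribution, from Wermuth, Wiedenbeck and Cox (2006), where the three properties are established algebraically, by defining $\zer_a$ as the indicator $\In[\cdot]$ of the corresponding partial-inversion formulas for real matrices and transferring the operator identities (commutativity of $\mathrm{inv}_a$ and its exchangeability with submatrix selection; idempotence of $\zer_a$ in place of the involutive property of $\mathrm{inv}_a$). Your route is genuinely different and self-contained: you first prove the reachability characterization --- $(\zer_a(M))_{ik}=1$ iff the graph of $M$ has an $ik$-path with all inner nodes in $a$ --- and then read off $\zer_a\zer_b=\zer_{a\cup b}$ for disjoint $a,b$ (hence $(i)$), monotonicity plus idempotence (hence $(ii)$, which matches the paper's own gloss that ``independences can be removed but never reintroduced'' and the contrast you draw with partial inversion being an involution), and the restriction identity under the explicit hypothesis $a\subseteq c$ (hence $(iii)$, exactly the clause ``a subset containing $a$'' in the paper's commentary following the lemma). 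Your Boolean formula for $\zer_a$ with the reflexive--transitive closure $M_{aa}^{\sharp}$ is the correct set-level generalization of the single-node formula in equation (7) of the paper, and your run-decomposition argument for commutativity is sound. Two small points to tighten: Boolean matrix powers record walks rather than paths, so you should remark that any walk shortens to a path whose inner nodes lie in the same subset, which keeps the characterization as stated (the paper's paths have distinct inner nodes); and for $\hcal_{NN}$ the orientation bookkeeping you flag is in fact automatic --- since context nodes have no parents, the upper block-triangular structure forces every concatenation counted by $M_{ba}M_{aa}^{\sharp}M_{ab}$ to be an anterior path, consistent with Lemma 7. As to what each approach buys: the algebraic proof of the cited source makes immediate the zero-pattern correspondence between $\zer_a$ and Gaussian partial inversion, which is precisely what the proof of Proposition 2 in this paper leans on; your combinatorial proof makes the graph-theoretic content of the operator transparent and is arguably the cleaner way to see why the nodes of $a$ may be closed in any order.
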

By property $(i)$,  it is enough, for some purposes, to show how the operator acts on a single node.  By property $(ii)$, independences
can be removed but never reintroduced so that these transformations satisfy set transitivity. 
 Property $(iii)$ justifies  node and edge reductions since
closing edge-inducing  $a$-line paths in a large graph and then selecting a square  submatrix for a subset  containing $a$, gives the same result as selecting the square submatrix first and then closing  the  $a$-line paths.

   Because of property $(i)$, one can always permute the matrix $\fcal$  into $\tilde{\fcal}$ and start  partial closure with node $i$ corresponding to  position (1,1) of $\tilde{\fcal}$.  Then for   $b=N\setminus \{i\}$, 
  \begin{equation}
 \label{def1zer}
\zer_i\,  {\tilde\fcal} =\In[\left(\begin{array}{rrr}
1 & \n \fcal_{ib}\\[2mm]
                \fcal_{bi}&\nn \fcal _{bb} +\fcal_{bi}\fcal_{ib}
              \end{array}\right)] , \end{equation}   
which says that  particular {\sf V}s in the graph are closed which  have node $i$ as inner node.
In the three small examples of Figure 6, an edge for node pair $1,3$ is induced with $i=2$.
\begin{figure}[H]
\centering
\includegraphics[scale=0.43]{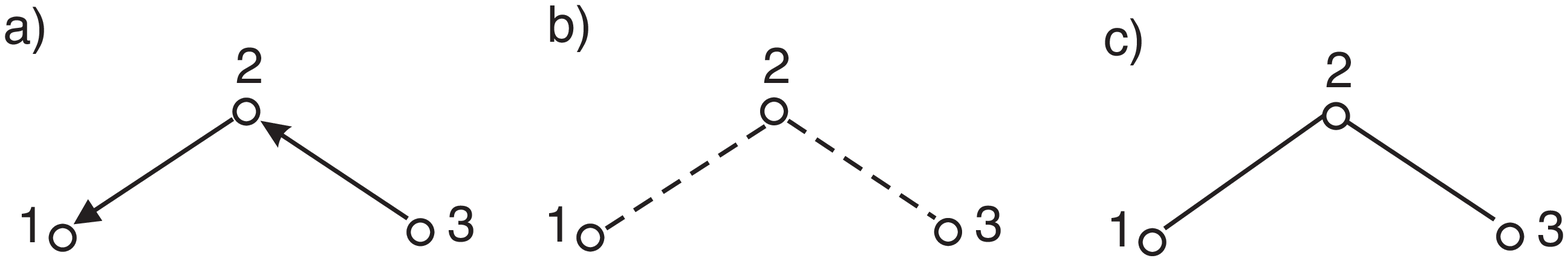}
 \caption[]{\small{Dependence base, 3-node graphs: a {\sf V} in a a) directed acyclic, b) covariance, c)  concentration graph;  an 
 active path (1,2,3) induces in a) $1\pitchfork 3$, in b) $1\pitchfork 3|2$, and in c) $1\pitchfork 3$}}
  \label{threeV}
\end{figure}

Applying $\zer_{i}$ to  the edge matrix of  a
directed acyclic graph, covariance graph or   concentration graph mimics, respectively, the  recursion relation for 
regression coefficients, covariances and concentrations; discussed  for instance  in Wermuth and Cox (1998). 

By
letting  the edge   induced by the three {\sf V} 's in Figure 6, `remember the type of edge at the path endpoints',
the induced edges become, respectively,
$$ \text{ a) } 1 \fla 3, \nn \nn \text{ b) } 1\dal 3, \nn \nn \text{ c) } 1 \ful 3.
$$

The transformation $\zer_a(\fcal)$  means that all $\sf V$s along $a$-line paths represented by  the edge matrix $\fcal$ are closed by an edge. The basic property $(i)$ implies  that the nodes in $a$ may be chosen for this  in any order. This requires   in particular that the inner nodes of the paths  of  $\fcal$ are of the same type,
either  all are collision nodes to form   {\bf \em collision paths},  or are all transmitting nodes. 
\begin{lem} {\bf Partial closure applied to {\bm  \Greg}}. The transformation $\kcal_{NN}=\zer_a( \hcal_{NN})$  
 closes  each $a$-line anterior path and $\qcal_{uu}=\zer_b(\wcal_{uu})$  each  dashed, $b$-line collision path.\end{lem}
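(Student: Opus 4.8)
The plan is to reduce Lemma~7 to the single-node formula \eqref{def1zer} and then to propagate the induced edges along paths using the commutativity of partial closure. First I would read off what one application $\zer_p$ does to $\hcal_{NN}$. By the basic properties of partial closure (Lemma~6, property $(i)$) I may bring the node $p$ to position $(1,1)$, so by \eqref{def1zer} the only new off-diagonal ones lie in the block $\hcal_{bb}+\hcal_{bp}\hcal_{pb}$; a one is placed at position $(j,l)$ exactly when $\hcal_{jp}=\hcal_{pl}=1$. Because $\hcal$ records an arrow by its head, so that $\hcal_{pl}=1$ encodes an arrowhead at $p$ coming from $l$, keeps the context block symmetric, and gives context nodes no parents, this condition is met precisely at the transmitting junctions $j\fla\snode\fla l$, $j\fla\snode\ful l$ and $j\ful\snode\ful l$ having $p$ as inner node. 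It is, by contrast, not met at a transmitting fork $j\fla\snode\fra l$ nor at $j\fla\snode\dal l$: there both arrowheads sit at the endpoints, so $\hcal_{pl}$ and $\hcal_{pj}$ vanish, and a covariance edge is not recorded in $\hcal$ at all. The induced edge inherits its type from the block structure, being an arrow when one endpoint lies in the past of the other and a context edge when both endpoints are context nodes, which is exactly the behaviour displayed in Figure~\ref{threeV}.

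Next I would iterate. Since each $\zer_p$ returns an edge matrix of the same upper-block-triangular form, the reading of the previous step applies at every stage, and by commutativity (Lemma~6) the map $\zer_a$ is the composition of the $\zer_p$, $p\in a$, taken in any order. I would then induct on the number of inner nodes of an $a$-line anterior $ik$-path. Closing one inner node $p\in a$ shuts the transmitting junction at $p$ between its two neighbours, and the three admissible cases show that the outcome is again an anterior path with $p$ deleted: removing a node of the arrow part leaves a shorter direction-preserving segment, removing the arrow--context junction fuses the two parts and leaves a shorter anterior path with the adjacent context node taking over as junction, and removing an inner context node shortens the undirected segment. Iterating until no inner node of the path survives leaves just the edge between the endpoints, so $\kcal_{NN}=\zer_a(\hcal_{NN})$ carries a one for every $a$-line anterior path. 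Conversely, each elementary step only composes transmitting junctions of the three admissible types, so any edge it creates corresponds to such a path, and by Lemma~6, property $(ii)$, no edge already present is ever lost.

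The covariance statement follows by the same scheme, only simpler. As $\wcal_{uu}$ is symmetric and records dashed edges alone, \eqref{def1zer} places a one at $(j,l)$ exactly when $\wcal_{jp}=\wcal_{pl}=1$, that is at the single collision configuration $j\dal\snode\dal l$ with inner node $p$, and the induced edge is again dashed by symmetry. Commutativity lets me compose these over $b$ in any order, and an induction on path length reduces each $b$-line dashed collision path, one collision inner node at a time, to the dashed edge joining its endpoints; hence $\qcal_{uu}=\zer_b(\wcal_{uu})$ closes each such path, as claimed.

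The delicate step is the directional bookkeeping of the first paragraph: one must verify, from the head-indexing of arrows, the symmetry of the context block and the parentlessness of context nodes, that the product $\hcal_{bp}\hcal_{pb}$ selects exactly the three anterior-type transmitting junctions and assigns the correct edge type, and in particular that it deliberately fails to close the transmitting fork $\fla\snode\fra$ and the configuration $\fla\snode\dal$, which are transmitting yet belong to no anterior path. Once this case analysis is secured, the propagation along paths is a routine induction resting on the commutativity of partial closure.
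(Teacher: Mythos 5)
Your proof is correct, and it reaches the lemma by a more elementary, self-contained route than the paper's own two-line argument. The paper proves Lemma~7 by asserting that $\hcal_{NN}$ represents exactly the anterior paths and $\wcal_{uu}$ exactly the dashed-line paths, then invoking Proposition~1 --- a statement about dependence-inducing {\sf V}s in a dependence base --- to say the {\sf V}s along these paths are edge-inducing by marginalizing, respectively conditioning, with order-independence handled by ``remembering the type of edge at the endpoints.'' You instead stay entirely inside the edge-matrix calculus: from the single-node formula \eqref{def1zer} you verify by direct index bookkeeping (head-indexing of arrows, symmetry of the context block, parentlessness of context nodes) that $\zer_p$ closes precisely the three anterior-type transmitting junctions $\fla\snode\fla$, $\fla\snode\ful$, $\ful\snode\ful$ with the correct induced edge type, and that the remaining transmitting configurations $\fla\snode\fra$ and $\fla\snode\dal$ --- edge-inducing for distributions, but on no anterior path --- are left unclosed; you then propagate along a path by induction on inner nodes via commutativity (Lemma~6, property $(i)$), checking that each elementary closure leaves a shorter anterior path, with the parallel and simpler argument for $\wcal_{uu}$. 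This buys rigor exactly where the paper is implicit: the published proof's appeal to Proposition~1 mixes the purely matrix-algebraic claim with its probabilistic interpretation, whereas your case analysis derives the closure behaviour from the definitions \eqref{hcal} and \eqref{wcal} alone, and your induction (together with the observation that the upper block-triangular form is preserved at every stage, which licenses iterating the single-node reading) makes precise the order-irrelevance the paper merely asserts. Two small remarks: your converse direction goes beyond what the lemma claims, and to make it airtight you should note that concatenating the two sub-paths behind an induced entry may yield a walk rather than a simple path, from which an anterior path between the endpoints must be extracted; and in the $\fla\snode\dal$ case the clean reason for non-closure is the one you give second --- dashed edges are simply not recorded in $\hcal$ --- rather than arrowhead placement. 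The paper's version, in exchange for these details, is shorter and ties the transformation directly to the tracing semantics of Proposition~1.
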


\begin{proof} 
 Each   anterior path in \Greg and no other type of path is represented by  $ \hcal_{NN}$ and each dashed-line   path  in  \Greg and no other type of path is represented by $ \wcal_{uu}$. By Proposition  1,  a {\sf V} 
along the former is edge-inducing by marginalizing over its inner node and of the latter by conditioning on its inner node. 
 Remembering the 
 type of edge at the endpoints of each {\sf V} on an $a$-line path of $\hcal_{NN}$ leads to the same induced edge for the
 endpoints of the path irrespective of the order in choosing single  nodes of $a$.
 \end{proof}
 


 {\bf  Closing active paths in regression graphs.} For   directed acyclic graphs,  it is known that the path criterion  on the starting  graph for 
 separation  of $\alpha$ from $\beta$ given $c$  can be reduced to an edge criterion after transforming  first the generating graph in terms of partial closure  and  closing  next the remaining   paths that are 
 relevant for deciding whether $\alpha \ci \beta| c$ is implied; see Marchetti and Wermuth (2009). This approach is now 
 extended to regression graphs and to  dependences in traceable regressions. For this, we take the  partitioning $N=\{\alpha, \beta, c, m\}$ of the node set of \Greg, $a=\alpha \cup m$,  $b=\beta \cup c$, and 
  $$\kcal_{NN}=\zer_a \hcal_{NN}, \nn   \qcal_{uu}=\zer_b \wcal_{uu}, \nn \qcal_{uv}=0, \nn \qcal_{vv}=\kcal_{vv}.$$ 
 
  \begin{prop}{\bf Induced edge matrices for $\bm{f_{a|b} f_b}$}. Sequences of regressions with graph  \Greg in node set $N=(u,v)$ and generating
 edge matrices $H_{NN}$ and $W_{uu}$ imply for $f_{a|b}f_{b}$, with the induced 
 regression graph $G^{N-a|b}_{\rm reg}$ for  $Y_a$ regressed on $Y_b$, as edge matrices 
 \begin{equation} \pcal_{a|b}=\In[\kcal_{ab}+\kcal_{aa}\qcal_{ab}\kcal_{bb}], \label{indpab}\end{equation}
  \begin{equation} \scal_{aa|b}=\In[\kcal_{aa}\qcal_{aa}\kcal_{aa}\T], \nn \nn \scal^{bb.a}=\In[\hcal_{bb}\T\qcal_{bb}\hcal_{bb}]\label{indsym}.\end{equation}
  \end{prop}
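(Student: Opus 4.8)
The plan is to read equations (\ref{indpab}) and (\ref{indsym}) as the zero-one images, under the indicator operator $\In[\cdot]$, of the Gaussian parameter identities already supplied by Lemma 5, exploiting the stated one-to-one correspondence between a zero in an edge matrix, a vanishing parameter in the associated regular Gaussian family, and a conditional independence. The single structural fact that powers the argument is that partial closure on an edge matrix reproduces exactly the nonzero pattern created by partial inversion on the matching parameter matrix.

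First I would record the building-block correspondences. By the dependence-base definition, every edge present in \Greg is carried by a single non-vanishing off-diagonal entry of $H_{NN}$ or $W_{uu}$, so $\In[H_{NN}]=\hcal_{NN}$ and $\In[W_{uu}]=\wcal_{uu}$. Now $K_{NN}$ is the partial inversion of $H_{NN}$ on $a$ and $Q_{uu}$ the partial inversion of $W_{uu}$ on the $u$-part of $b$; by Lemma 7 the matching partial closures are $\zer_a\hcal_{NN}=\kcal_{NN}$ and $\zer_b\wcal_{uu}=\qcal_{uu}$, and these are precisely $\In[K_{NN}]$ and $\In[Q_{uu}]$, since closing each $a$-line anterior path and each $b$-line dashed collision path fills in exactly the entries that partial inversion makes nonzero. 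The context block is handled by the stated conventions $\qcal_{uv}=0$ and $\qcal_{vv}=\kcal_{vv}$, mirroring both the block structure of $Q$ and the fact that $W_{uu}$ carries no information on $v$.

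With these identifications in place, I would apply $\In[\cdot]$ to the three identities of Lemma 5 and substitute $\In[K]=\kcal$, $\In[Q]=\qcal$ and $\In[H]=\hcal$ block by block. Provided $\In[\cdot]$ may be carried through the matrix sums and products that appear in (\ref{ipab}) and (\ref{isym}), this turns the right-hand sides into $\In[\kcal_{ab}+\kcal_{aa}\qcal_{ab}\kcal_{bb}]$, $\In[\kcal_{aa}\qcal_{aa}\kcal_{aa}\T]$ and $\In[\hcal_{bb}\T\qcal_{bb}\hcal_{bb}]$, which are exactly the claimed $\pcal_{a|b}$, $\scal_{aa|b}$ and $\scal^{bb.a}$ of (\ref{indpab}) and (\ref{indsym}).

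The one nontrivial step, and the place I expect the main obstacle, is precisely the permission to carry $\In[\cdot]$ through those sums and products: real arithmetic can cancel, so that a sum of non-vanishing contributions vanishes even though each Boolean summand is one. This is where the dependence-base hypothesis does the work. Edge-minimality together with singleton transitivity, via Proposition 1 and Lemma 7, guarantees that every {\sf V} along an anterior or a dashed collision path is dependence-inducing, so each Boolean one produced by partial closure matches a generically non-vanishing entry of the partially inverted parameter matrix, while a Boolean zero forces that entry to vanish identically. The only residual cancellations real arithmetic could create arise when two or more active paths join the same node pair -- the path cancellations already set aside for traceable regressions -- so outside these excluded constellations $\In[\cdot]$ commutes with the sums and products and equations (\ref{indpab}) and (\ref{indsym}) follow.
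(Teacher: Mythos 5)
Your overall route---apply $\In[\cdot]$ to the three identities of Lemma 5 and substitute the correspondences $\In[K_{NN}]=\kcal_{NN}=\zer_a\hcal_{NN}$ and $\In[Q_{uu}]=\qcal_{uu}=\zer_b\wcal_{uu}$ supplied by Lemma 7---is exactly the paper's: its proof says that partial closure mimics the transformations of partial inversion and defers the entry-by-entry verification to Lemma 3 of Marchetti and Wermuth (2009). Where you diverge, and where your argument as written would fail, is the final step about carrying $\In[\cdot]$ through the sums and products. The paper does not resolve this by excluding cancellations; it observes that all elements of the induced matrices in \eqref{indpab} and \eqref{indsym} are non-negative. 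The edge-matrix expressions are sums of products of binary, hence nonnegative, matrices, so on the Boolean side no path contribution can ever be lost, and the zero patterns coincide structurally: a zero entry of $\pcal_{a|b}$ means no edge-inducing path exists at all, so every monomial in the polynomial expansion of the corresponding entry of $\Pi_{a|b}$ in \eqref{ipab} vanishes identically, for every choice of parameters; a one means that entry is nonzero for generic parameter values.

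Your version instead proves a conditional statement---``outside path cancellations, $\In[\cdot]$ commutes''---which would establish \eqref{indpab} and \eqref{indsym} only for distributions free of path cancellations. But Proposition 2 is a claim about the induced edge matrices of the graph $G^{N-a|b}_{\rm reg}$, i.e.\ about structural zeros and ones, and it holds unconditionally: a one in $\pcal_{a|b}$ is not asserted to correspond to a non-vanishing parameter in a given distribution. That stronger, distribution-level reading is precisely what needs the dependence-base hypothesis, and it is the content of Proposition 3, not of Proposition 2, whose statement nowhere assumes a dependence base; your appeal to edge-minimality, singleton transitivity and Proposition 1 imports hypotheses the proposition does not carry. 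Moreover, singleton transitivity governs single {\sf V}s and cannot rule out the multi-path cancellations you invoke it against---the paper is explicit that traceable regressions may still exhibit path cancellations. The repair is to replace your last paragraph by the nonnegativity observation above: $\In[\cdot]$ applied to \eqref{indpab} and \eqref{indsym} records exactly the existence of an inducing path, and these Boolean zeros coincide with the parameter entries of \eqref{ipab} and \eqref{isym} that vanish identically in the Gaussian family.
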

 
\begin{proof} Partial closure mimics  transformations of partial inversion such that all elements of the induced matrices are non-negative.
The zero entries in equations \eqref{ipab}, \eqref{isym} coincide with those in \eqref{indpab}, \eqref{indsym}, nonzero entries in the former correspond to ones in the latter; see Lemma 3 of Marchetti and Wermuth (2009) for more detail.
\end{proof}

Of the active paths, defined for Lemma 2 and   needed to decide for uncoupled  pairs $i,k$  of \Greg whether they are coupled in $G^{N-a|b}_{\rm reg}$, some remain uncoupled  after applying $zer_a \hcal_{NN}$ and $\zer_b \wcal_{uu}$ but get closed with the non-negative sums of edge matrix products in \eqref{indpab}, \eqref{indsym}.  Thus, as with partial closure, no edges get ever removed with the latter types of graph transformations so that set transitivity is used implicitly.

For the  $N=(a, b)$ as for Proposition 2,
let   ${\rm o}_a$ denote nodes  in $a$ and  ${\rm o}_b$ nodes in $b$. \begin{coro}  For   $i,k$ the endpoints of paths that are edge-inducing for $G^{N-a|b}_{\rm reg}$,  there are three types of $ik$-path  uncoupled  in the graph having edge matrices $\kcal_{NN}$ and $\qcal_{uu}$,  
$$ i \fla {\rm o}_a\dal  {\rm o}_b  \fla k, \nn \nn  i \fla {\rm o}_a\dal  {\rm o}_a  \fra  k,  \nn \nn  i \fra {\rm o}_b\dal  {\rm o}_b  \fla k, $$
which are closed with the induced edge matrices $\pcal_{a|b}$, $\scal_{aa|b}$,  $\scal^{bb}$, respectively, in \eqref{indpab}, \eqref{indsym}.
\end{coro}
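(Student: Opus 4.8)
The plan is to read the three path shapes directly off the three matrix products appearing in equations \eqref{indpab} and \eqref{indsym}, using the fixed edge-type meaning of a unit entry in each factor. Since $\kcal=\zer_a\hcal_{NN}$ inherits from $\hcal_{NN}$ that a unit off-diagonal entry $\kcal_{ik}$ stands for $i\fla k$ or $i\ful k$, a unit entry of a transpose $\kcal\T$ (and of $\hcal\T$) stands for the reversed arrow $i\fra k$; and since $\qcal=\zer_b\wcal_{uu}$ inherits from $\wcal_{uu}$ that a unit entry $\qcal_{ik}$ stands for the covariance edge $i\dal k$. With these conventions fixed, each three-fold product becomes a sum over inner nodes of a product of three edge indicators, and a positive $(i,k)$-entry exhibits an $ik$-path whose edges are named by the three factors in turn.

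First I would expand the off-diagonal part of $\pcal_{a|b}$ beyond the direct term $\kcal_{ab}$, namely $\kcal_{aa}\qcal_{ab}\kcal_{bb}$. A positive $(i,k)$-entry requires nodes ${\rm o}_a\in a$ and ${\rm o}_b\in b$, both in $u$ because $\qcal$ acts on $u$, with $\kcal_{i{\rm o}_a}\qcal_{{\rm o}_a{\rm o}_b}\kcal_{{\rm o}_b k}\neq0$; reading the factors in order yields $i\fla{\rm o}_a\dal{\rm o}_b\fla k$, which $\pcal_{a|b}$ closes into the arrow $i\fla k$. The same reading applied to $\kcal_{aa}\qcal_{aa}\kcal_{aa}\T$ gives $i\fla{\rm o}_a\dal{\rm o}_a\fra k$ with both inner nodes in $a$, the transpose reversing the final arrow, and $\scal_{aa|b}$ closes this into $i\dal k$; applied to $\hcal_{bb}\T\qcal_{bb}\hcal_{bb}$ it gives $i\fra{\rm o}_b\dal{\rm o}_b\fla k$ with both inner nodes in $b$, closed by $\scal^{bb.a}$ into $i\ful k$. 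These three products are the only off-diagonal product terms in \eqref{indpab} and \eqref{indsym}, so the three shapes, together with their degenerate forms where an endpoint coincides with an inner-node position through a diagonal unit entry, are exhaustive.

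Next I would confirm that each such path is genuinely uncoupled in the graph carrying edge matrices $\kcal_{NN}$ and $\qcal_{uu}$ and genuinely active. By Lemma 7, $\kcal_{NN}$ and $\qcal_{uu}$ already hold every $a$-line anterior edge and every dashed $b$-line collision edge; the inner node attached to the single $\dal$ edge of each path above is therefore of a mixed type absorbed by neither partial closure, so $i$ and $k$ stay uncoupled until the matrix product is formed. Classifying the {\sf V}s at the inner nodes then matches the activity requirement of Lemma 2: in the first shape ${\rm o}_a$ is transmitting, hence in $m$, and ${\rm o}_b$ collision, hence in $c\cup{\rm ant}_c$; in the second both inner nodes are transmitting; in the third both are collision; exactly as conditioning on $b$ or marginalizing over $a$ demands. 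By Proposition 1 each of these {\sf V}s is dependence-inducing, so the induced edge records a dependence and not merely a formal coupling.

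The main obstacle I expect is the bookkeeping that ties the matrix reading to the graph-theoretic definitions: one must track arrow directions correctly through the two transposes and verify that the inner nodes land in the intended subsets ($a$ versus $b$, and within $u$), so that collision nodes are the ones conditioned and transmitting nodes the ones marginalized. Once this correspondence between the factors of each product and the {\sf V}-classification of Proposition 1 is pinned down, exhaustiveness and the assignment of each shape to its closing matrix are immediate from the form of \eqref{indpab} and \eqref{indsym}.
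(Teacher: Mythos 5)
Your proposal is correct and follows essentially the same route as the paper, which gives no separate proof but obtains the corollary exactly as you do: by reading the three path shapes off the matrix products $\kcal_{aa}\qcal_{ab}\kcal_{bb}$, $\kcal_{aa}\qcal_{aa}\kcal_{aa}\T$ and $\hcal_{bb}\T\qcal_{bb}\hcal_{bb}$ in \eqref{indpab} and \eqref{indsym}, after noting (as in the remark following Proposition~2) that these are precisely the active paths left uncoupled by $\zer_a\hcal_{NN}$ and $\zer_b\wcal_{uu}$. Your additional checks via Lemma~7, Lemma~2 and Proposition~1 are sound elaborations of that same argument rather than a different method.
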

After remembering  the types of edge at the path endpoints, we have with
$\pcal_{a|b}$ an induced  bipartite graph of arrows pointing from $b$ to $a$, with
$\scal_{aa|b}$ an induced  covariance graph, and with
$ \scal^{bb.a} $ an induced  concentration graph.


 \begin{lem}{\bf  Edge matrices induced by {\bm \Greg} for ${\bm{f_{\alpha\beta| c}}}$.} The subgraph induced
 by nodes $\alpha \cup \beta$ in  $G_{\rm reg}^{N-a|b}$ captures  the independence  implications of \Greg for   $f_{\alpha|\beta c}f_{\beta|c}$.\end{lem}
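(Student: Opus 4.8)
The plan is to reduce the claim to Proposition 2 together with the standard way that marginalizing and conditioning act on the two undirected components of a regression graph. By Proposition 2 the induced regression graph $G^{N-a|b}_{\rm reg}$, with edge matrices $\pcal_{a|b}$, $\scal_{aa|b}$ and $\scal^{bb.a}$ of equations \eqref{indpab} and \eqref{indsym}, captures the independence structure of $f_{a|b}f_b=f_{\alpha m|\beta c}\,f_{\beta c}$; recall that a zero in an edge matrix corresponds to a vanishing parameter in the associated Gaussian member and hence to a conditional independence. Since $f_{\alpha|\beta c}f_{\beta|c}$ is obtained from $f_{a|b}f_b$ by marginalizing $Y_m$ out of the conditional factor $f_{a|b}$ and by conditioning the marginal factor $f_b$ on $Y_c$, the task is to show that taking the subgraph induced by $\alpha\cup\beta$ realizes exactly these two operations at the level of edge matrices.

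First I would treat the response side. The factor $f_{a|b}$ is a conditional Gaussian with regression matrix $\Pi_{a|b}$ (edge matrix $\pcal_{a|b}$) and conditional covariance $\Sigma_{aa|b}$ (edge matrix $\scal_{aa|b}$). Marginalizing over the response component $Y_m$ leaves $E(Y_\alpha|Y_b)$ equal to the $\alpha$-rows of $\Pi_{a|b}$ and $\cov(Y_\alpha|Y_b)$ equal to the $\alpha\alpha$-block of $\Sigma_{aa|b}$, so both reduce to principal submatrices. Because zeros are preserved when passing to a submatrix, the $(\alpha\beta)$-block of $\pcal_{a|b}$ is the arrow matrix of $f_{\alpha|\beta c}$ between $\alpha$ and $\beta$, and the $\alpha\alpha$-block of $\scal_{aa|b}$ is the covariance-graph edge matrix of $Y_\alpha$ given $Y_b$; in particular no new edge among $\alpha$, and no new arrow from $\beta$ into $\alpha$, is induced by deleting the nodes $m$. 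This is precisely why $m$ is placed inside $a$: deletion within the covariance block of $G^{N-a|b}_{\rm reg}$ acts as marginalization, which is edge-preserving for a covariance graph.

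Next I would treat the context side. The marginal factor $f_b=f_{\beta c}$ has concentration matrix $\Sigma^{bb.a}$ with edge matrix $\scal^{bb.a}$, a concentration graph on $b$. Conditioning a Gaussian on a subset amounts to taking the principal submatrix of its concentration matrix, so the concentration of $Y_\beta$ given $Y_c$ is the $\beta\beta$-block of $\Sigma^{bb.a}$ and its edge matrix is the $\beta\beta$-block of $\scal^{bb.a}$. Again zeros are preserved, so deleting the nodes $c$ induces no edge among $\beta$; this is why $c$ is placed inside $b$, where node deletion within the concentration block acts as conditioning, which is edge-preserving for a concentration graph. Combining the two sides, the subgraph of $G^{N-a|b}_{\rm reg}$ induced by $\alpha\cup\beta$ carries exactly the arrow matrix $(\pcal_{a|b})_{\alpha\beta}$, the covariance edge matrix $(\scal_{aa|b})_{\alpha\alpha}$ and the concentration edge matrix $(\scal^{bb.a})_{\beta\beta}$, which are the three edge matrices of $f_{\alpha|\beta c}f_{\beta|c}$.

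Finally I would note that no further path closure is required: all edge-inducing anterior paths through $m$ and all collision paths through $c$ have already been closed in forming $\kcal_{NN}=\zer_a\hcal_{NN}$ and $\qcal_{uu}=\zer_b\wcal_{uu}$, since $m\subseteq a$ and $c\subseteq b$, so the exchangeability of partial closure with submatrix selection lets the restriction to $\alpha\cup\beta$ be carried out last without changing the result. Because the edge matrices track the independence constraints of \Greg exactly through the Gaussian member and the global Markov property, the induced subgraph captures the independence implications of \Greg for $f_{\alpha|\beta c}f_{\beta|c}$. The main obstacle lies in the second and third paragraphs: one must verify the duality that, within the single graph operation of taking an induced subgraph, deletion of $m$ behaves as marginalizing over a response while deletion of $c$ behaves as conditioning on a context variable, and that each is an edge-preserving submatrix operation for the relevant type of undirected component.
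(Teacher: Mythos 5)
Your proof is correct and takes essentially the same route as the paper: the paper likewise reduces the claim to Proposition 2 and observes that, by the interpretation of the components $\pcal_{a|b}$, $\scal_{aa|b}$ and $\scal^{bb.a}$, no edges are induced in passing to the submatrices $[\pcal_{a|b}]_{\alpha,\beta}$, $[\scal_{aa|b}]_{\alpha,\alpha}$ and $[\scal^{bb.a}]_{\beta,\beta}$, which jointly define the subgraph induced by $\alpha\cup\beta$. Your second and third paragraphs merely make explicit, via the Gaussian parametrization, why deleting $m$ acts as marginalizing in the covariance block and deleting $c$ as conditioning in the concentration block---detail the paper's terse proof leaves implicit.
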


 \begin{proof}
 By the interpretation of the edge matrix components  $\pcal_{a|b},  \scal_{aa|b}, \scal^{bb.a}$, no edges are induced by taking
$$  \pcal_{\alpha|\beta.c}=[\pcal_{a|b}]_{\alpha, \beta}, \nn \nn   \scal_{\alpha\alpha |b}=[\scal_{aa|b}]_{\alpha, \alpha}, \nn \nn   \scal_{\beta \beta.a}=[\scal^{bb.a}]_{\beta \beta}.$$
Jointly,
these edge submatrices define the subgraph induced by $\alpha\cup \beta$  in  $G_{\rm reg}^{N-a|b}$.
\end{proof}

  The  induced graphs in node set $\alpha \cup \beta$ and   $G^{N-a|b}_{\rm reg}$  in node set $N$, are examples of  independence-predicting graphs  in contrast to   independence-preserving  graphs such as the ribbonless  graphs of Sadeghi and Lauritzen (2012)
  and  the  different types of Markov-equivalent graphs, such as     summary graphs.  With {\bf \em  independence-preserving  graphs}, one can  derive  effects of additional marginalizing and conditioning  in the starting  graph
  while {\bf \em independence-predicting graphs} can, in general,  only be used to decide on edges present or missing  in the induced graph.

  \begin{prop} {\bf Edge criteria for  implied independences and dependences.}  A dependence base \Greg   implies $\alpha\ci \beta|c$ if  $\pcal_{\alpha|\beta.c}=0$ and it implies $\alpha\pitchfork \beta|c$ if  $\pcal_{\alpha|\beta.c}\neq 0. $
 \end{prop}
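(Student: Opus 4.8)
The plan is to prove the two implications separately, reading the independence half straight off the global Markov property and reserving the real work for the dependence half, where the dependence-base hypotheses enter. For $\pcal_{\alpha|\beta.c}=0\Rightarrow\alpha\ci\beta|c$ I would argue as follows: by Proposition 2 together with Corollary 1, the operators $\zer_a\hcal_{NN}$ and $\zer_b\wcal_{uu}$ followed by the non-negative products in \eqref{indpab} close exactly the active $ik$-paths of \Greg given $c$ and record each such closed path as a nonzero entry; by Lemma 8 the block $\pcal_{\alpha|\beta.c}=[\pcal_{a|b}]_{\alpha,\beta}$ is precisely the edge set between $\alpha$ and $\beta$ in the subgraph induced by $\alpha\cup\beta$ in $G^{N-a|b}_{\rm reg}$. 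A vanishing block therefore says that no active path joins $\alpha$ and $\beta$ given $c$, and Lemma 2 yields $\alpha\ci\beta|c$ at once; since Lemma 2 is stated for arbitrary disjoint sets, no separate passage from pairwise to set statements is needed.

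For $\pcal_{\alpha|\beta.c}\neq0\Rightarrow\alpha\pitchfork\beta|c$ I would fix a nonzero entry, say at position $(i,k)$ with $i\in\alpha$ and $k\in\beta$. By Corollary 1 this entry records a closed active $ik$-path, so \Greg contains an active path between $i$ and $k$ given $c$. I would then convert the path into a dependence by induction on its inner nodes, with Proposition 1 as the single step: closing an inner collision node realizes $i\ci k|c\Rightarrow i\pitchfork k|{\rm o}c$ and closing an inner transmitting node realizes $i\ci k|{\rm o}c\Rightarrow i\pitchfork k|c$, both licensed in a dependence base by edge-minimality and singleton transitivity. Propagating the dependence from one endpoint to the other gives $i\pitchfork k|c$ and hence $\alpha\pitchfork\beta|c$.

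The hard part will be reconciling this induction with the edge-matrix calculus. Proposition 1 speaks of a single {\sf V}, so the inductive hypothesis must carry the correct conditioning set at every junction of a long path, retaining the inner nodes of the collision {\sf V}s traversed so far and marginalizing the transmitting ones, so that each step is exactly an instance of Proposition 1, with singleton transitivity barring the two-variable cancellation there. More seriously, the sum in \eqref{indpab} may pool several distinct active $ik$-paths into one entry, and the calculus is monotone: by Lemma 6 it never removes an edge and so implicitly assumes set transitivity, whereas a dependence base assumes only singleton transitivity. A nonzero pooled entry could in principle correspond to contributions that annihilate. I would close this gap exactly as in the path-cancellation discussion following Definition 2: any such annihilation is a violation of set transitivity, a non-generic parametric constellation, so for a dependence base the nonzero entry $\pcal_{\alpha|\beta.c}\neq0$ signals the induced dependence $\alpha\pitchfork\beta|c$ that the graph is meant to express.
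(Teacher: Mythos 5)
Your proposal is correct and takes essentially the same route as the paper, whose proof is a one-line citation of Lemma 7, equation \eqref{indpab} and Lemma 8 -- precisely the machinery you unpack (the calculus closing exactly the active paths via Proposition 2 and Corollary 1, Lemma 8 plus the global Markov property of Lemma 2 for the independence half, and Proposition 1 iterated along an active path for the dependence half). Your closing worry about several pooled paths annihilating is also resolved the way the paper resolves it, only outside the proof: the remark after Proposition 2 notes that these transformations never remove edges and so use set transitivity implicitly, making the implied dependence a graph-level statement, with path cancellations being exactly the constellations that Corollary 2 must exclude for faithfulness.
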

 \begin{proof} The statement results with Lemma 7, equation \eqref{indpab} and Lemma 8.\end{proof}
 
{\bf Distributions  satisfying all and only the  independences captured by {\bm \Greg}}. A given distribution is  said to be faithful to a graph if every of its independence constraints is captured by a given independence graph; see Spirtes, Glymour and Scheines (1993). 
For a distribution to be faithful to \Greg, it has to satisfy  the  properties needed for the graph transformations of Proposition 3, that is properties $(i)$ to $(vii)$.


\begin{coro}{\bf Distributions that are faithful to {\bm\Greg}.} For a distributions with density $f_N$ generated over 
a dependence base \Greg, the following statements are equivalent\\
$(i)$ the distribution is faithful to \Greg,\\
$(ii)$ every independence and every dependence statement implied by \Greg holds for $f_N$, \\
$(iii)$ $f_N$ satisfies as additional properties: composition, intersection  and set transitivity,\\
$(iv)$ $f_N$ can be generated as a traceable regression without any path cancellations.
\end{coro}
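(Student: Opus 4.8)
The plan is to show that, once the standing hypothesis is unpacked, the four statements all reduce to the single condition that set transitivity, property $(vii)$, holds for $f_N$. First I would record what the hypothesis already supplies: since \Greg is a dependence base for $f_N$, the density satisfies properties $(i)$ to $(vi)$ together with singleton transitivity $(viii)$, the graph is edge-minimal, and by Lemmas 2 and 3 every independence implied by \Greg holds for $f_N$. Moreover, by Proposition 1 each dependence-inducing {\sf V} of \Greg is already dependence-inducing for $f_N$, so the third requirement of Definition 2 is met; its first two requirements hold as well, the second because the three decompositions there merely combine properties $(ii)$ to $(vi)$. Hence $f_N$ is automatically a traceable regression, and statement $(iv)$ contributes nothing beyond its clause on path cancellations. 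By the characterisation given after Definition 2, that for traceable regressions set transitivity can fail only when at least two active paths join a pair, statement $(iv)$ reduces to the assertion that property $(vii)$ holds; and statement $(iii)$ reduces to the same assertion, since composition $(v)$ and intersection $(vi)$ are already in force.

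Next I would establish the equivalence of statements $(i)$ and $(ii)$ using Proposition 3. Because $f_N$ is generated over \Greg, every implied independence holds, so statement $(ii)$ adds only that every implied dependence holds. Proposition 3 furnishes a clean dichotomy: for any disjoint $\alpha,\beta,c$ the graph implies $\alpha\ci\beta\,|\,c$ when $\pcal_{\alpha|\beta.c}=0$ and implies $\alpha\pitchfork\beta\,|\,c$ when $\pcal_{\alpha|\beta.c}\neq 0$. Faithfulness is precisely the absence of a spurious independence, that is of a pair with $\pcal_{\alpha|\beta.c}\neq 0$ for which nevertheless $\alpha\ci\beta\,|\,c$ holds in $f_N$; by the dichotomy this is exactly the failure of some implied dependence. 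Hence statements $(i)$ and $(ii)$ are equivalent.

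It remains to tie these to property $(vii)$. Here I would argue that, under the dependence-base hypothesis, a single active $ik$-path always induces $i\pitchfork k\,|\,c$: such a path is a concatenation of edge-inducing {\sf V}s, and Proposition 1 with singleton transitivity propagates the dependence along it without self-cancellation. Consequently an implied dependence can fail in $f_N$ only when two or more active paths join the same pair and their parameter contributions cancel, which is exactly a path cancellation and, by the characterisation cited above, exactly a violation of property $(vii)$. Thus statement $(ii)$ holds if and only if there are no path cancellations if and only if property $(vii)$ holds, which with the previous paragraph closes the equivalence of all four statements; the direction from statement $(i)$ to statement $(iii)$ is in any case immediate from the remark preceding the corollary that faithfulness requires properties $(i)$ to $(vii)$. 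I expect the main obstacle to be this final linkage: making rigorous both that a single active path cannot self-cancel and that multi-path cancellation coincides exactly with the failure of set transitivity. This amounts to tracking how property $(vii)$ governs the non-negative sums of edge-matrix products in Proposition 2 and in equation \eqref{indpab} when several paths contribute to the same entry of $\pcal_{\alpha|\beta.c}$.
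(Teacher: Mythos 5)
Your proposal is correct and follows essentially the same route as the paper, which states this corollary without a separate proof and justifies it implicitly through exactly the ingredients you assemble: the dependence-base hypothesis already supplying properties $(i)$--$(vi)$ and $(viii)$ so that only set transitivity is genuinely new in $(iii)$, the dichotomy of Proposition 3 linking faithfulness to the holding of every implied dependence, and the remark after Definition 2 identifying failures of set transitivity in traceable regressions with path cancellations. The point you flag as the main obstacle --- that a single active path cannot self-cancel and that multi-path cancellation coincides with the failure of property $(vii)$ --- is left at the same informal level in the paper itself, so your reconstruction matches its intended argument.
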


Thus, faithfulness imposes in general an additional  strong condition on traceable sequences of  regressions.  Exceptions are,   for instance, directed acyclic graphs in which each response has only one parent.
But the most   common  situation in observational and in interventional studies is  to have  two or more regressors influencing a  response. Thus, for  using regression graphs to interpret such structures or  to plan future studies with a subset of the variables  in a subpopulation, it is not sensible to assume that a given distribution is faithful to a regression graph.  One needs  to have traceable regressions though and should investigate reasons for path cancellations if they happen to occur.\\[-9mm]
 
 \section{Discussion}
 Sequences of regressions in joint responses permit  to model changes in several response components occurring at the same time when there is an intervention. This  contrasts with interventions in sequences of regressions in single responses and in other types of chain graph models. 
  
We have 
 identified  properties of sequences of regressions in essentially arbitrary joint and single response variables and named them  traceable regressions.  A corresponding  regression graph, \Greg is a dependence base of the joint distribution in addition to capturing the independences in the regressions.
 One  knows now that the independence structure of such traceable regressions  can differ from the implications derived in terms of its generating regression graph only when there  are path  cancellations. 
 
The consequences  derivable with  a graph give changes in structure that  result in families of distributions generated over the graph while  one may not be able to generalize to this family from the structure that one can see for a distribution with one given set of parameters,   for instance as estimated in a sample.

  Sequences of traceable  regressions 
  and a regression graph \Greg  have implications for  a regression of $Y_a$ on $Y_b$ and dependences of  $Y_b$  alone when these are based on a  reordered node set $N=(a,b)$ that can be expressed with transformed  edge matrix components of \Greg. When marginalizing over $m$ in $a=\alpha \cup m$ and conditioning on $c$ in $b=\beta \cup c$,  the specific implications of \Greg
  for  the conditional densities $f_{\alpha|\beta c}$  and $f_{\beta |c}$ can now be derived  with a subgraph induced by $\alpha \cup \beta$ in this transformed graph.
   An edge matrix criterion instead of
 a path criterion gives the global Markov property of \Greg and detects,  in addition,  induced dependences when  \Greg
 is a dependence base for $f_N$.  
 
 Many new questions have opened up. These include types of  conditions on a  given distribution under  which it  represents a traceable regression,  conditions on independence-predicting graphs which assure that they are  also independence-preserving, applications such as the special details needed to improve existing methods for meta-analyses,  or 
 computational aspects, such as conditions under which one type  of several   equivalent graph transformations  becomes 
 computationally much less intensive than others.\\[-3mm]

\noindent{\bf Acknowledgement.} The work reported in this paper  was undertaken during the tenure of a Senior Visiting Scientist Award by the International Agency of Research on Cancer.
I thank   G. Byrnes, D.R. Cox, G.M. Marchetti, K. Sadeghi  and the referees for their most constructive comments.\\[-9mm] 
\renewcommand\refname{\normalsize References.}

\end{document}